\documentclass[12pt] {article}
%\includeonly{preamble}
%\include{preamble}

\usepackage{latexsym}
\usepackage[dvips]{graphics,color}
\usepackage{graphpap}
\usepackage{alltt}
\usepackage{graphicx}
\usepackage{amsmath}
\usepackage{amsfonts}
\usepackage{psfrag}
%\usepackage[active]{srcltx}
%%\usepackage{color}
%%%%%%%%%%%%%%%%%%%%%%%%%%%%%%%%%%%%%%%%%%%%%%%%%%%%%%%%%%%%%%%%%%%%%%%%
\newtheorem{theorem}{Theorem}[section]
\newtheorem{lemma}[theorem]{Lemma}
\newtheorem{proposition}[theorem]{Proposition}
\newtheorem{proof}[theorem]{Proof}
\newtheorem{definition}[theorem]{Definition}
%\newtheorem{theorem}{Θεώρημα}[subsection]
%\newtheorem{lemma}[theorem]{Λήμμα}
%\newtheorem{proposition}[theorem]{Πρόταση}
%\newtheorem{proof}[theorem]{Απόδειξη}
%\newtheorem{definition}[theorem]{Ορισμός}
%\newtheorem{remark}[theorem]{Παρατήρηση}
%\newtheorem{solution}[theorem]{Λύση}
%%%%%%%%%%%%%%%%%%%%%%%%%%%%%%%%%%%%%%%%%%%%%%%%%%%%%%%%%%%%%%%%%%%%%%%%%%%%%%%%%%%%%%%%%%%%%
\newcommand{\beq}{\begin{equation}}
\newcommand{\feq}[1]{\label{#1} \end{equation}}
\newcommand{\beqr}{\begin{eqnarray}}
\newcommand{\feqr}{\end{eqnarray}}
\def\non{\nonumber}
\def\noi{\noindent}
% The use of this command is to draw a slash line from the top of letter
%to the bottom. Usage: $\slasha{C}$.
%\def\slasha#1{\setbox0=\hbox{$#1$}#1\hskip-\wd0\hbox to\wd0{\hss\sl/\/\hss}}

% The use of this command is to draw a slash line in the middle of the letter.
%Usage: $\slashb{C}$.
%\def\slashb#1{\setbox0=\hbox{$#1$}#1\hskip-\wd0\dimen0=5pt\advance
%       \dimen0 by-\ht0\advance\dimen0 by\dp0\lower0.5\dimen0\hbox
%         to\wd0{\hss\sl/\/\hss}}
% Four macros for Dirac's bracket notation of expectation values
% Usage: \bra{a}+ \ket{b}$ $\bracket{a}{b}$ $\EV{a}{b}{c}$ and they
%correspond to: <a|, |b>, <a|b>, <a|b|c>

\definecolor{red}{rgb}{1,0,0}
\setcounter{totalnumber}{2}

%%%%%%%%%%%%%%%%%%%%%%%%%%%%%%%%%%%%%%%%%%%%%%%%%%%%%%%%%%%%%%%%%%%%%%%%
%Some useful Journal names

\def\pre#1#2{Phys. Rev. {\bf{E#1}}, #2}

\def\ap#1#2{Ann. of Phys. {\bf{#1}}, #2}

\def\tams#1#2{Trans. Amer. Math. Soc. {\bf{A#1}}, #2}

%%%%%%%%%%%%%%%%%%%%%%%%%%%%%%%%%%%%%%%%%%%%%%%%%%%%%%%%%%%%%%%%%%%%%%%%%%%%%

\setlength{\textwidth}{17cm} \setlength{\textheight}{24cm}
\setlength{\oddsidemargin}{-.5cm}
\setlength{\evensidemargin}{-.5cm} \setlength{\topmargin}{-1.5cm}
\footskip = 1.0cm

%%%%%%%%%%%%%%%%%%%%%%%%%%%%%%%%%%%%%%%%%%%%%%%%%%%%%%%%%%%%%%%%%%%%%%%%%%%%%%%%%%%%%%%

%%%%%%%%%%%%%%%%%%%%%%%%%%%%%%%%%%%%%%%%%%%%%%%%%%%%%%%%%%%%%%%%%%%%%%%%%%%%%%%%%%%%%%%%%%%%%
\begin{document}

\begin{center}

%\vspace{2cm}

{\Large \bf The weakly coupled fractional one-dimensional Schr\"{o}dinger operator with index $\bf 1<\alpha \leq 2$}\\
[4mm]

\large{Agapitos N. Hatzinikitas} \\ [5mm]

{\small University of Aegean, \\
School of Sciences, \\
Department of Mathematics, \\
Karlovasi, 83200\\
Samos Greece \\
E-mail: ahatz@aegean.gr}\\ [5mm]

\end{center}
\begin{abstract}
We study fundamental properties of the fractional, one-dimensional Weyl operator $\hat{\mathcal{P}}^{\alpha}$ densely defined on the Hilbert space $\mathcal{H}=L^2({\mathbb R},dx)$ and determine the asymptotic behaviour of both the free Green's function and its variation with respect to energy for bound states. In the sequel we specify the Birman-Schwinger representation for the Schr\"{o}dinger operator $K_{\alpha}\hat{\mathcal{P}}^{\alpha}-g|\hat{V}|$ and extract the finite-rank portion which is essential for the asymptotic expansion of the ground state. Finally, we determine necessary and sufficient conditions for there to be a bound state for small coupling constant $g$.
\end{abstract}

%%%%%%%%%%%%%%%%%%%%%%%%%%%%%%%%%%%%%%%%%%%%%%%%%%%%%%%%%%%%%%%%%%%%%%%%%%%%%%%%%%%%%%%%%%%%%%%%%%
%%%%%%%%%%%%%%%%%%%%%%%%%%%%%%%%%%%%%%%%%%%%%%%%%%%%%%%%%%%%%%%%%%%%%%%%%%%%%%%%%%%%%%%%%%%%%%%%%%
\noindent\textit{Key words:} Fractional Schr\"{o}dinger operator, Birman-Schwinger representation, Asymptotic behaviour of ground state.\\
\textit{PACS:} 02.50.-r, 05.40-a, 02.30.Rz, 02.30.Gp
%%%%%%%%%%%%%%%%%%%%%%%%%%%%%%%%%%%%%%%%%%%%%%%%%%%%%%%%%%%%%%%%%%%%%%%%%%%%%%%%%%%%%%%%%%%%%%%%%%
%%%%%%%%%%%%%%%%%%%%%%%%%%%%%%%%%%%%%%%%%%%%%%%%%%%%%%%%%%%%%%%%%%%%%%%%%%%%%%%%%%%%%%%%%%%%%%%%%%
\section{Introduction}
\label{sec0}
In the present work we consider an infinite complex and separable Hilbert space $\mathcal{H}=L^2({\mathbb R},dx)$ with inner product defined by \ref{sec1 : eq1} and the fractional Weyl operator $\hat{\mathcal{P}}^{\alpha}$ defined by \ref{sec1 : eq2}. This nonlocal operator is the infinitesimal generator of time translations for symmetric $\alpha$-stable L\'{e}vy processes \cite{Ref1a} and form a strongly continous semigroup \cite{Ref7,Ref8}. We prove the properties: linearity, unboundeness, symmetricity and self-adjointness for $\hat{\mathcal{P}}^{\alpha}$. 
\par A key r\^{o}le to the Birman-Schwinger representation plays the free Green's function (FGF) which obeys \ref{sec2 : eq6}. Inserting a contour representation for the function $e^{-u^{\alpha}}$, present in the Fourier-Laplace transform of FGF, and performing the integration on the complex $s$-plane we determine the asymptotic behaviour of FGF near the points $z=0$ and $z=\infty$. The behaviour is only investigated for irrational values of the index-$\alpha$ since for rational values the corresponding expressions are $\alpha$-sensitive. In the infinity, the FGF is bounded and the operator pocesses physically acceptable bound states. The FGF in the neighbourhood of $z=0$ is decomposed into three terms \ref{sec2 : eq20a}, in the zero energy limit: a singular part, a constant energy independent part and a regular power series part. As an application, in the $\alpha \uparrow 2$ limit, we recover the known one-dimensional FGF. We also study the asymptotic behaviour of $\partial G_{\alpha}/\partial \kappa$ near the point $z=0$ and observe that it splits into only a singular and a regular part in the $\kappa \downarrow 0$ limit. In this case the energy inpendent term appears only for $\alpha=2$. 
\par Next we consider the Schr\"{o}dinger operator \ref{sec3 : eq01} in one space dimension with a multiplication operator in $L^2({\mathbb R})$ which is assumed to be continuous, real, strictly negative function and tends to zero sufficiently fast as $|x|\rightarrow \infty$. The eigenvalue problem \ref{sec3 : eq03} for bound states, using the Birman-Schwinger transformation, is  transformed into an equivalent one with integral operator having kernel expressed in terms of the Green's function \ref{sec2 : eq9}. Under a supplementary condition imposed on the potential, the Birman-Schwinger operator is shown to decompose into a singular operator which is of trace class and a finite part with bounded integral kernel. The finite portion belongs to the Hilbert-Schmidt class $\mathcal{S}_2$ and the associated integral kernels valid for any real value $\alpha \in (1,2)$. Finally, we prove that a bound state exists iff it obeys \ref{sec4 : eq4} and is uniquely determined.
%%%%%%%%%%%%%%%%%%%%%%%%%%%%%%%%%%%%%%%%%%%%%%%%%%%%%%%%%%%%%%%%%%%%%%%%%%%%%%%%%%%%%%%%%%%%%%%%%%%%%%%%%%%%
\section{Definition and Properties of the Fractional Weyl \\ Operator}
\label{sec1}
Consider an infinite complex and separable Hilbert space $\mathcal{H}=L^2({\mathbb R},dx)$ with inner product defined by
\begin{eqnarray}
<f,g>=\int_{{\mathbb R}}\bar{f}(x)g(x)dx
\label{sec1 : eq1}
\end{eqnarray} 
where $\bar{f}$ denotes the complex conjugate of $f$. 
\begin{definition} 
The nonlocal fractional operator  $\hat{\mathcal{P}}^{\alpha}: \,\, \mathcal{D}(\hat{\mathcal{P}}^{\alpha})\subset \mathcal{H}\rightarrow \mathcal{H}$ defined by \cite{Ref1}
\begin{eqnarray}
(\hat{\mathcal{P}}^{\alpha}\psi)(x)&=& \frac{1}{\cos(\frac{\pi \alpha}{2})}
\left({}_{-\infty}\hat{\mathcal{P}}_x^{\alpha}+{}_{x}\hat{\mathcal{P}}_{\infty}^{\alpha}\right)
\psi(x)=\frac{1}{\cos(\frac{\pi
\alpha}{2})}\left(\hat{\mathcal{P}}_-^{\alpha}+\hat{\mathcal{P}}_+^{\alpha}\right)\psi(x)\non \\
&\stackrel{def}{=}& \!\!\! \! \frac{1}{ \cos(\frac{\pi \alpha}{2})} \frac{1}{\Gamma(m-\alpha)}
\left(\frac{d}{dx}\right)^m \!\!\!
\left( \int_{-\infty}^{x}\frac{\psi(z)}{(x-z)^{\alpha -m +1}}dz +
(-1)^m \int_{x}^{\infty} \!\!\! \frac{\psi(z)}{(z-x)^{\alpha -m +1}}dz \right), \non \\
&=&\frac{1}{ \cos(\frac{\pi \alpha}{2})} \frac{1}{\Gamma(m-\alpha)}
\left(\frac{d}{dx}\right)^m \int_0^{\infty}\frac{\psi(x-u)+(-1)^m \psi(x+u)}{u^{\alpha-m+1}}du
\label{sec1 : eq2}
\end{eqnarray}
is called the Weyl operator.
\label{sec1 : def1} 
\end{definition}
The domain of the Weyl operator is defined by 
\begin{eqnarray}
\mathcal{D}(\hat{\mathcal{P}}^{\alpha}):=\{f, (\hat{\mathcal{P}}^{k}f)\in L^2({\mathbb R}), \, \forall k=1,\cdots , [\alpha]: \,\, \textrm{with}\,\, f\in AC^{[a]}(\Omega)\}
\label{sec1 : eq3}
\end{eqnarray}
where the space $AC^{[\alpha]}(\Omega)$  consists of all functions $f$ which have continous derivatives up to order $[\alpha]-1$ on every compact interval $\Omega \subset {\mathbb R}$ with $f^{(k)}\in AC(\Omega), \, \forall k=1,\cdots,[\alpha]-1$. It is a dense subset of $\mathcal{H}$. \par In \ref{sec1 : eq2}, $\alpha=[\alpha]+\{\alpha\}$ with $[\alpha], \{\alpha\}$ representing the integral and fractional part ($0<\{\alpha\}<1$) of the real number $\alpha$. Also $m=[\alpha]+1$, $\hat{\mathcal{P}}^{\alpha}_{\mp} f$ are 
the left- and right-handed fractional derivatives and $\Gamma$ is the Euler's gamma 
function. From the definition \ref{sec1 : eq2} we note that the left-handed (right-handed) Weyl derivative of a function $\psi$ at a point $x$ depends on all function values to the left (right) of the point. When $\alpha$ is an even integer then the two derivatives are localized and equal while for odd integer values of $\alpha$ both derivatives appear opposite in signs. For $\alpha =2l$ ($l\in {\mathbb N}$) the operator could be defined with a $\frac{1}{2}$-factor to cancel the double contribution of the even order derivatives from the two terms. In this case the Weyl operator coincides to the elliptic differential operator $(-\Delta)^l=(-d^2/dx^2)^l$ of order $2l$. 
\par The Weyl operator is easily checked to be linear since 
\begin{eqnarray}
\left(\hat{\mathcal{P}}^{\alpha}(c_1 f+c_2 g)\right)(x)=c_1(\hat{\mathcal{P}}^{\alpha}f)(x)+c_2 (\hat{\mathcal{P}}^{\alpha} g)(x) \quad \forall f,g\in \mathcal{D}(\hat{\mathcal{P}}^{\alpha}) \quad \textrm{and} \quad c_1,c_2\in {\mathbb C}.
\label{sec1 : eq4}
\end{eqnarray}
\begin{lemma}
The operator $\hat{\mathcal{P}}^{\alpha}$ defined by \ref{sec1 : eq2} is unbounded.
\label{sec1 : lem1} 
\end{lemma}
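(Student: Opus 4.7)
The plan is to exhibit a sequence $\{\psi_n\} \subset \mathcal{D}(\hat{\mathcal{P}}^{\alpha})$ with $\|\psi_n\|_{\mathcal{H}} = 1$ while $\|\hat{\mathcal{P}}^{\alpha} \psi_n\|_{\mathcal{H}} \to \infty$, which precludes any uniform operator bound. The cleanest route is to pass to momentum space and exploit the fact that the Riesz--type combination in \ref{sec1 : eq2} acts on the Schwartz class $\mathcal{S}(\mathbb{R}) \subset \mathcal{D}(\hat{\mathcal{P}}^{\alpha})$ as a Fourier multiplier proportional to $|k|^{\alpha}$, which is manifestly unbounded.

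First I would establish the Fourier representation. With $m=[\alpha]+1$, the left- and right-handed pieces of \ref{sec1 : eq2} are, modulo the prefactor $1/\cos(\pi\alpha/2)$, convolutions of $\psi$ with the one-sided Riesz kernels $u_{\pm}^{m-\alpha-1}/\Gamma(m-\alpha)$ followed by $(d/dx)^m$. Invoking the distributional transforms $\mathcal{F}\{x_{\pm}^{s-1}/\Gamma(s)\}(k)=(\pm ik)^{-s}$ together with $\widehat{\psi^{(m)}}(k)=(ik)^m\hat\psi(k)$, a branch--cut calculation distinguishing $k>0$ from $k<0$ shows that the two pieces contribute multipliers $|k|^{\alpha}\,e^{\pm i(\pi\alpha/2)\,\mathrm{sgn}(k)}$. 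Their sum is $2|k|^{\alpha}\cos(\pi\alpha/2)$, so after dividing by $\cos(\pi\alpha/2)$,
\[
(\mathcal{F}\hat{\mathcal{P}}^{\alpha}\psi)(k)=2|k|^{\alpha}\,\hat\psi(k),\qquad \psi\in\mathcal{S}(\mathbb{R}).
\]

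I would then produce the explicit divergent sequence. Fix $\phi\in C_c^{\infty}(\mathbb{R})$ with $\mathrm{supp}\,\phi\subset[0,1]$ and $\|\phi\|_{L^2}=1$, and define $\psi_n\in\mathcal{S}(\mathbb{R})$ through its Fourier transform by $\hat\psi_n(k):=\phi(k-n)$. Plancherel yields $\|\psi_n\|_{\mathcal{H}}=1$, while
\[
\|\hat{\mathcal{P}}^{\alpha}\psi_n\|_{\mathcal{H}}^{2} \;=\; 4\int_{n}^{n+1}\! k^{2\alpha}\,|\phi(k-n)|^{2}\,dk \;\geq\; 4n^{2\alpha}\,\|\phi\|_{L^{2}}^{2}\;=\;4n^{2\alpha}\;\to\;\infty.
\]
Since $\mathcal{S}(\mathbb{R})$ is dense in $\mathcal{H}$, no constant $C$ can satisfy $\|\hat{\mathcal{P}}^{\alpha}\psi\|_{\mathcal{H}}\leq C\|\psi\|_{\mathcal{H}}$ on the domain, and unboundedness follows.

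The only delicate step is the clean identification of the Fourier symbol: for $1<\alpha<2$ the kernels $u^{m-\alpha-1}=u^{1-\alpha}$ are only locally integrable near the origin, so the convolutions in \ref{sec1 : eq2} cannot be differentiated under the integral sign naively. They must be interpreted as tempered distributions, or, equivalently, the outer $(d/dx)^m$ transferred onto the Schwartz test function by integration by parts to produce an absolutely convergent hypersingular representation, before the tabulated Fourier transforms may be invoked. Once this algebra is in place the divergence of $\|\hat{\mathcal{P}}^{\alpha}\psi_n\|_{\mathcal{H}}/\|\psi_n\|_{\mathcal{H}}$ is immediate from the growth of $|k|^{\alpha}$, and the proof is complete.
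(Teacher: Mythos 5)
Your proof is correct, but it takes a genuinely different route from the paper. The paper argues entirely in position space: it restricts $\hat{\mathcal{P}}^{\alpha}$ to $\mathcal{D}(\hat{\mathcal{P}}^{\alpha})\cap L^2[0,1]$ and tests it against the explicit compressed polynomial bumps $\phi_n(x)=1-(nx)^{[\alpha]+1}$ on $[0,\tfrac{1}{n})$ (zero elsewhere), computes $\hat{\mathcal{P}}^{\alpha}\phi_n$ by direct integration of the Weyl kernel, and shows the Rayleigh-type quotient grows like $n^{\alpha+[\alpha]+1}$. You instead pass to momentum space, identify the Fourier symbol $2|k|^{\alpha}$, and translate a fixed bump to high frequency, getting growth like $n^{\alpha}$. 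Both sequences do the job. What your approach buys is brevity and robustness --- the argument is the standard one for any Fourier multiplier with unbounded symbol, and you correctly flag the one delicate point (the hypersingular kernels $u^{1-\alpha}$ must be handled distributionally before the tabulated transforms apply). What it costs, relative to the paper's internal logic, is that the symbol identification $(\mathcal{F}\hat{\mathcal{P}}^{\alpha}\psi)(k)=2|k|^{\alpha}\hat{\psi}(k)$ is precisely the content of the paper's equation \ref{sec1 : eq19}, which is only established later (in the self-adjointness Lemma \ref{sec1 : lem4} and Appendix A); so your proof front-loads machinery the paper deliberately defers, whereas the paper's computation, though messier, is self-contained at the point where the unboundedness lemma appears. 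If you adopt your route you should either cite the later multiplier lemma explicitly or reorder the exposition so the symbol computation comes first.
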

\begin{proof}
\label{sec1 : pr1} 
\end{proof}
Suppose that $k<\alpha<k+1$ where $k\in {\mathbb Z}_{+}$. The Weyl operator
$\hat{\mathcal{P}}^{\alpha}$ is an extension of 
\begin{eqnarray}
\hat{\mathcal{P}}^{\alpha}_0=\hat{\mathcal{P}}^{\alpha}\Bigr|_{Y}
\label{sec1 : eq5}
\end{eqnarray}
where $Y=\mathcal{D}(\mathcal{P}^{\alpha})\cap L^2[0,1]$. If $\hat{\mathcal{P}}^{\alpha}_0$ is unbounded so is $\hat{\mathcal{P}}^{\alpha}$. Consider the sequence of functions $\{\phi_n\}_{n\in {\mathbb N},\, n\geq 2}$ defined by
\begin{eqnarray}
\phi_n(x)=\left\{\begin{array}{ll} 1-(nx)^{[\alpha]+1}, & x\in [0,\frac{1}{n})\\
0, & x\in [\frac{1}{n},1]. \end{array}\right.
\label{sec1 : eq6}
\end{eqnarray} 
We can verify that $\phi_n\in L^2[0,1]$ since
\begin{eqnarray}
\parallel \phi_n \parallel^2 &=& \int_0^{\frac{1}{n}} \left(1-(nx)^{[\alpha]+1}\right)^2 dx \non \\
&=& \left(1-\frac{2}{[\alpha]+2}+\frac{1}{2([\alpha]+1)+1}\right)\frac{1}{n}.
\label{sec1 : eq7}
\end{eqnarray}
Denoting by  
\begin{eqnarray}
g_n(x)&=& \int_0^{\frac{1}{n}}\frac{\phi_n(x-u)+(-1)^m \phi_n(x+u)}{u^{\alpha-[\alpha]}}du 
= \frac{\left(1+(-1)^{[\alpha]+1}\right)}{[\alpha]+1-\alpha}\frac{1}{n^{[\alpha]+1-\alpha}}\non \\
&+& \!\!\! (-1)^{[\alpha]+2}\sum_{k=0}^{[\alpha]+1}\!\! \left(\!\! \begin{array}{c} [\alpha]+1 \\ k \end{array}\!\! \right) \left(1+(-1)^k \right)\frac{1}{2([\alpha]+1)-(\alpha+k)}\frac{1}{n^{[\alpha]+1-(\alpha+k)}}x^k
\label{sec1 : eq8}
\end{eqnarray}
and substituting \ref{sec1 : eq8} into \ref{sec1 : eq2} a straightforward calculation gives
\begin{eqnarray}
\left(\hat{\mathcal{P}}_0^{\alpha}\phi_n \right)(x)= \left\{\begin{array}{ll} -\frac{1}{ \cos(\frac{\pi \alpha}{2})} \left(1+(-1)^{[\alpha]+1}\right)\frac{1}{[\alpha]+1-\alpha}\frac{\Gamma([\alpha]+2)}{\Gamma([\alpha]+1-\alpha)}n^{[\alpha]+\alpha+1}, & 0<x<\frac{1}{n}\\ 0, & \frac{1}{n}\leq x<1. \end{array}\right.
\label{sec1 : eq9}
\end{eqnarray}
The quotient
\begin{eqnarray}
\frac{\parallel \hat{\mathcal{P}}_0^{\alpha}\phi_n \parallel }{\parallel \phi_n\parallel }&=&\frac{1}{ \left|\cos(\frac{\pi \alpha}{2})\right|} \left(1+(-1)^{[\alpha]+1}\right) \frac{\Gamma(2+[\alpha])}{\Gamma(2+[\alpha]-\alpha)} \non \\
&\times& \frac{1}{\left(1-\frac{2}{[\alpha]+2}+\frac{1}{2([\alpha]+1)+1}\right)^{\frac{1}{2}}} n^{\alpha+[\alpha]+1}>n^{\alpha+[\alpha]+1}
\label{sec1 : eq10}
\end{eqnarray}
shows that $\hat{\mathcal{P}}_0^{\alpha}$ is unbounded for $[\alpha]=$odd. If $[\alpha]=$even then the sequence \ref{sec1 : eq6} is defined by $\phi_n(x)=1-(nx)^{[\alpha]}$ for $x\in [0,\frac{1}{n})$, and zero elsewhere. In this case we derive a similar expression to \ref{sec1 : eq10} with the only difference the substitution $[\alpha]\rightarrow [\alpha]+1$.
\begin{lemma}
The linear operator $\hat{\mathcal{P}}^{\alpha}: \, \mathcal{D}(\hat{\mathcal{P}}^{\alpha})\rightarrow \mathcal{H}$ defined by \ref{sec1 : eq2} is symmetric.
\label{sec1 : lem2} 
\end{lemma}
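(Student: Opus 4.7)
The plan is to exhibit $\hat{\mathcal{P}}^\alpha$ as a Fourier multiplication operator with a \emph{real} symbol, from which symmetry follows immediately from Plancherel's theorem. The target identity is
\[
\widehat{\hat{\mathcal{P}}^\alpha\psi}(k)=2|k|^\alpha\,\hat{\psi}(k),\qquad \psi\in\mathcal{D}(\hat{\mathcal{P}}^\alpha).
\]

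To establish it, I would compute the Fourier transforms of the one-sided Weyl operators separately. Reading \ref{sec1 : eq2} as an $m$-fold derivative of a one-sided Riemann--Liouville-type convolution and invoking the standard distributional Fourier pair for the kernel $u_+^{m-\alpha-1}/\Gamma(m-\alpha)$, one obtains
\[
\widehat{\hat{\mathcal{P}}^\alpha_- \psi}(k)=(ik)^\alpha \hat{\psi}(k),\qquad \widehat{\hat{\mathcal{P}}^\alpha_+ \psi}(k)=(-ik)^\alpha \hat{\psi}(k),
\]
with principal branches. The identity $(ik)^\alpha+(-ik)^\alpha=2|k|^\alpha\cos(\pi\alpha/2)$ then cancels the prefactor $1/\cos(\pi\alpha/2)$ in \ref{sec1 : eq2} and yields the displayed symbol. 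Since $2|k|^\alpha\in\mathbb{R}$, Parseval's identity gives, for all $f,g\in\mathcal{D}(\hat{\mathcal{P}}^\alpha)$,
\[
\langle \hat{\mathcal{P}}^\alpha f,g\rangle=\int_{\mathbb{R}}2|k|^\alpha\,\overline{\hat{f}(k)}\,\hat{g}(k)\,dk=\langle f,\hat{\mathcal{P}}^\alpha g\rangle,
\]
which is precisely the required symmetry.

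The main technical obstacle lies in justifying the Fourier identity on the concrete domain \ref{sec1 : eq3} rather than on the Schwartz class. Two steps in particular require care: pulling the outer $m$-fold derivative under the Fourier transform, which is legitimized by the absolute-continuity and successive-derivative hypotheses embedded in $\mathcal{D}(\hat{\mathcal{P}}^\alpha)$; and making sense of the distributional pairing with the non-integrable kernel $u^{m-\alpha-1}$ near $u=0$, where the $L^2$ hypothesis $\hat{\mathcal{P}}^k f\in L^2$ for $k\le[\alpha]$ is exactly what guarantees the convolution defines an $L^2$ function.

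An alternative route avoids Fourier theory altogether. Substituting the second form of \ref{sec1 : eq2} into $\int(\hat{\mathcal{P}}^\alpha f)(x)\,\overline{g(x)}\,dx$, one integrates by parts $m$ times to transfer the outer derivative onto $\bar{g}$ (boundary contributions at $\pm\infty$ vanish by the domain assumptions), applies Fubini to exchange the $u$- and $x$-integrals, and changes variables $y=x\mp u$ in the two summands. The reality of $\cos(\pi\alpha/2)$ and $\Gamma(m-\alpha)$, together with the sign $(-1)^m$ produced by the integration by parts matching the $(-1)^m$ already present in the second summand of \ref{sec1 : eq2}, reassembles the result as $\int f(y)\,\overline{(\hat{\mathcal{P}}^\alpha g)(y)}\,dy$. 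The delicate points in this route are again the boundary-term vanishing and the Fubini justification.
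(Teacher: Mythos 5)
Your proposal is correct in outline, but note that your \emph{primary} route is not the one the paper takes for this lemma --- the paper's proof is essentially the ``alternative route'' you sketch at the end. The paper first establishes the decay $\lim_{|x|\rightarrow\infty}f^{(k)}(x)=0$ for $k=0,\ldots,[\alpha]-1$ from the $L^2$ and absolute-continuity hypotheses (by integrating $\tfrac{d}{dy}|f(y)|^2$ and observing that the resulting integral converges), then integrates by parts $[\alpha]+1$ times and changes variables to prove the duality $\langle {}_{-\infty}\hat{\mathcal{P}}_x^{\alpha}f,\,g\rangle=\langle f,\,{}_{x}\hat{\mathcal{P}}_{\infty}^{\alpha}g\rangle$; symmetry of the sum follows because the left- and right-handed pieces swap under the adjoint, with the $(-1)^{[\alpha]+1}$ from the partial integrations absorbed by the $(-1)^m$ in the right-handed term, exactly as you describe. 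Your Fourier-multiplier argument is a genuinely different proof of \emph{this} lemma, and in fact it is the argument the paper defers to the next lemma (self-adjointness), where $\mathcal{F}(\hat{\mathcal{P}}^{\alpha}f)=2|p|^{\alpha}\mathcal{F}f$ is proved for Schwartz functions in Appendix A and then extended using the unitarity of $\mathcal{F}$. What the Fourier route buys is strictly more: once the real symbol $2|p|^{\alpha}$ is in hand you get unitary equivalence with the multiplication operator and hence self-adjointness, not merely symmetry. What it costs is precisely the obstacle you flag yourself: identifying the pointwise-defined Weyl operator on the concrete domain \ref{sec1 : eq3} with the multiplier operator requires an extension argument beyond $S({\mathbb R})$, which the integration-by-parts route avoids. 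Both routes are sound; the paper simply uses each of them once, one per lemma.
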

\begin{proof}
\label{sec1 : pr2} 
\end{proof}
We first prove that if the functions $f, f^{(1)}\in L^2({\mathbb R_+})$ with $f$ absolute continuous then 
\begin{eqnarray}
\lim_{x\rightarrow \infty}f(x)=0.
\label{sec1 : eq10a} 
\end{eqnarray}
Integrating $\frac{d}{dy}|f(y)|^2$, we get
\begin{eqnarray}
|f(x)|^2=|f(0)|^2+\int_0^x\left(f^{(1)}(y)\overline{f(y)}+\overline{f^{(1)}(y)}f(y)\right)dy.
\label{sec1 : eq10b} 
\end{eqnarray}
The integral on the right-hand side converges as $x\rightarrow \infty$ since $f,f^{(1)}\in L^2({\mathbb R_+})$. Therefore the limit \ref{sec1 : eq10a} exists and can be only zero otherwise $f\notin L^2$.
\par This argument can be generalized in our case and therefore the following boundary conditions hold 
\begin{eqnarray}
\lim_{|x|\rightarrow \infty}f^{(k)}(x)=0=\lim_{|x|\rightarrow \infty}g^{(k)}(x), \,\, k=0,\cdots, [\alpha]-1. 
\label{sec1 : eq11}
\end{eqnarray}
By definition, a linear and densely defined operator $\hat{\mathcal{A}}$ on a Hlibert space $\mathcal{H}$ is symmetric if
\begin{eqnarray}
<\hat{\mathcal{A}}f,g>=<f,\hat{\mathcal{A}}g>\quad \forall f, g\in \mathcal{D}(\hat{\mathcal{A}}).
\label{sec1 : eq11a} 
\end{eqnarray}
This can be established for $\hat{\mathcal{P}}^{\alpha}$ by proving the relation 
\begin{eqnarray}
<{}_{-\infty}\hat{\mathcal{P}}_x^{\alpha}f,g>=<f,{}_{x}\hat{\mathcal{P}}_{\infty}^{\alpha}g>, \quad \forall f, g\in \mathcal{D}(\hat{\mathcal{P}}^{\alpha}).
\label{sec1 : eq12}
\end{eqnarray}
Using partial integration and \ref{sec1 : eq11}, one has 
\begin{eqnarray}
\int_{{\mathbb R}} \left({}_{-\infty}\hat{\mathcal{P}}_x^{\alpha}\bar{f}\right) (x)g(x)dx
&=& \frac{1}{\Gamma([\alpha]+1-\alpha)}\int_{{\mathbb R}} \left(\frac{d}{dx} \right)^{[\alpha]+1} \left(\int_{0}^{\infty} \frac{\bar{f}(x-u)}{u^{\alpha-[\alpha]}}du\right)g(x)dx \nonumber \\
&=& \frac{(-1)^{[\alpha]+1}}{\Gamma([\alpha]+1-\alpha)}\int_{{\mathbb R}} \left(\int_{0}^{\infty} \frac{\bar{f}(x-u)}{u^{\alpha-[\alpha]}}du\right)\left(\frac{d}{dx} \right)^{[\alpha]+1}g(x)dx \nonumber \\
&=& \frac{(-1)^{[\alpha]+1}}{\Gamma([\alpha]+1-\alpha)}\int_{{\mathbb R}} \bar{f}(s)\left(\frac{d}{ds} \right)^{[\alpha]+1} \left(\int_{s}^{\infty}\frac{g(x)}{(x-s)^{\alpha-[\alpha]}}dx\right)ds \nonumber \\
&=&\int_{{\mathbb R}} \bar{f}(x)\, \left({}_{x}\hat{\mathcal{P}}_{\infty}^{\alpha}g\right)(x)dx. 
\label{sec1 : eq13}
\end{eqnarray}
\noi From this property we conclude that the Hilbert-adjoint operator $(\hat{\mathcal{P}}^{\alpha})^*$ is an extension of $\hat{\mathcal{P}}^{\alpha}$ thus
\begin{eqnarray}
\mathcal{D}(\hat{\mathcal{P}}^{\alpha})\subset \mathcal{D}((\hat{\mathcal{P}}^{\alpha})^*) \quad \rm{and} \quad \hat{\mathcal{P}}^{\alpha}=(\hat{\mathcal{P}}^{\alpha})^*\biggr|_{\mathcal{D}(\hat{\mathcal{P}}^{\alpha})}.
\label{sec1 : eq14a} 
\end{eqnarray}
 
\begin{lemma}
The operator $\hat{\mathcal{P}}^{\alpha}$ defined by \ref{sec1 : eq2} is self-adjoint. 
\label{sec1 : lem4} 
\end{lemma}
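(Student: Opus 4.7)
The plan is to diagonalize $\hat{\mathcal{P}}^{\alpha}$ via the Plancherel isometry $\mathcal{F}$ and thereby reduce the problem to the self-adjointness of a multiplication operator. A Schwartz-class calculation gives the Fourier symbols of the one-sided Weyl pieces,
\begin{equation*}
\mathcal{F}\bigl[\hat{\mathcal{P}}_-^{\alpha} f\bigr](\xi) = (i\xi)^{\alpha}\hat{f}(\xi), \qquad \mathcal{F}\bigl[\hat{\mathcal{P}}_+^{\alpha} f\bigr](\xi) = (-i\xi)^{\alpha}\hat{f}(\xi),
\end{equation*}
with principal branches chosen. Using $(i\xi)^{\alpha}+(-i\xi)^{\alpha}=2\cos(\pi\alpha/2)|\xi|^{\alpha}$ together with the $1/\cos(\pi\alpha/2)$ prefactor of Definition \ref{sec1 : def1}, one obtains $\mathcal{F}[\hat{\mathcal{P}}^{\alpha}f](\xi)=2|\xi|^{\alpha}\hat{f}(\xi)$. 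Thus, up to the unitary $\mathcal{F}$, $\hat{\mathcal{P}}^{\alpha}$ is a restriction of the multiplication operator $M_{\varphi}$ with real-valued measurable symbol $\varphi(\xi)=2|\xi|^{\alpha}$; multiplication by a real measurable function is self-adjoint on its maximal domain $\{g\in L^2:\varphi g\in L^2\}$.

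Given the symmetry from Lemma \ref{sec1 : lem2} and the inclusion $\hat{\mathcal{P}}^{\alpha}\subset(\hat{\mathcal{P}}^{\alpha})^*$ of \ref{sec1 : eq14a}, it remains only to verify the reverse inclusion $\mathcal{D}((\hat{\mathcal{P}}^{\alpha})^*)\subset\mathcal{D}(\hat{\mathcal{P}}^{\alpha})$. I would invoke the basic criterion: a densely defined symmetric operator is self-adjoint iff $\mathrm{Ker}\bigl((\hat{\mathcal{P}}^{\alpha})^*\pm iI\bigr)=\{0\}$. Let $u$ lie in either kernel; pairing $(\hat{\mathcal{P}}^{\alpha})^*u\mp iu=0$ against $f\in\mathcal{S}({\mathbb R})\subset\mathcal{D}(\hat{\mathcal{P}}^{\alpha})$ and applying Plancherel reduces the relation to $(2|\xi|^{\alpha}\mp i)\hat{u}(\xi)=0$ almost everywhere. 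Since the coefficient never vanishes on ${\mathbb R}$, $\hat{u}\equiv 0$, and consequently $u=0$. Both deficiency indices vanish, so $\hat{\mathcal{P}}^{\alpha}=(\hat{\mathcal{P}}^{\alpha})^*$.

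The principal technical obstacle is justifying the Fourier-multiplier identity uniformly over the intricate domain \ref{sec1 : eq3}, rather than only on $\mathcal{S}({\mathbb R})$. This requires showing that $\mathcal{S}({\mathbb R})$ is a core for $\hat{\mathcal{P}}^{\alpha}$ and that the vanishing boundary conditions \ref{sec1 : eq11} at $|x|\to\infty$, which follow from the absolute-continuity hypothesis built into $\mathcal{D}(\hat{\mathcal{P}}^{\alpha})$, permit the Schwartz-class computation to extend by density to the full domain. With this identification in hand, the kernel calculation above is immediate and, together with the symmetry already proved, yields the claim.
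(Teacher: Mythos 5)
Your core step is the same as the paper's: compute the Fourier symbol $2|\xi|^{\alpha}$ (the paper does exactly the $(i\xi)^{\alpha}+(-i\xi)^{\alpha}=2\cos(\pi\alpha/2)|\xi|^{\alpha}$ cancellation in Appendix A) and conjugate by the Plancherel unitary to reduce to multiplication by a real symbol. Where you diverge is the finishing move. The paper simply declares $\hat{\mathcal{P}}^{\alpha}$ unitarily equivalent to $M_{2|p|^{\alpha}}$ on its maximal domain and imports self-adjointness wholesale; you instead keep the symmetry from Lemma \ref{sec1 : lem2} and kill the deficiency spaces by testing $(\hat{\mathcal{P}}^{\alpha})^{*}u=\pm iu$ against Schwartz functions, which is a cleaner and more self-contained closing argument. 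Two caveats. First, the criterion you invoke is stated slightly too strongly: triviality of $\mathrm{Ker}\bigl((\hat{\mathcal{P}}^{\alpha})^{*}\pm iI\bigr)$ for a densely defined symmetric operator gives \emph{essential} self-adjointness, i.e.\ $\overline{\hat{\mathcal{P}}^{\alpha}}=(\hat{\mathcal{P}}^{\alpha})^{*}$; to conclude $\hat{\mathcal{P}}^{\alpha}=(\hat{\mathcal{P}}^{\alpha})^{*}$ on the nose you additionally need $\hat{\mathcal{P}}^{\alpha}$ closed on the domain \ref{sec1 : eq3}, which is exactly the domain-identification issue you flag at the end but do not resolve. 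Second, that unresolved issue is not a defect relative to the paper: the paper's proof silently assumes the same thing when it asserts that \ref{sec1 : eq18b}, derived on $\mathcal{S}({\mathbb R})$, identifies $\mathcal{D}(\hat{\mathcal{P}}^{\alpha})$ with the maximal multiplier domain $\mathcal{D}(M_{2|p|^{\alpha}})$. So your proposal is at least as rigorous as the published argument, and more honest about where the remaining work lies; completing either version requires showing that $\mathcal{S}({\mathbb R})$ is a core and that \ref{sec1 : eq3} is the full preimage of $\{g:|\xi|^{\alpha}g\in L^{2}\}$ under $\mathcal{F}$.
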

\begin{proof}
\label{sec1 : pr4} 
\end{proof}
For $f\in S({\mathbb R})$ \footnote{$S({\mathbb R})$ is the set of infinitely differentiable and rapidly decreasing functions on ${\mathbb R}$, namely $sup_{x\in {\mathbb R}}|x^n (D^m f)(x)|<\infty, \forall m,n=0,1,\cdots$. This space is usually called the Schwartz space.} the Fourier transform of $\hat{\mathcal{P}}^{\alpha}f$ is (see Appendix A for the proof)
\begin{eqnarray}
\left(\mathcal{F}(\hat{\mathcal{P}}^{\alpha}f)\right)(p)&=& \frac{1}{2\pi}\int_{-\infty}^{\infty}e^{-ipx} (\hat{\mathcal{P}}^{\alpha}f)(x) dx= \frac{1}{2\pi}\int_{-\infty}^{\infty} (\hat{\mathcal{P}}^{\alpha}e^{-ipx}) f(x) dx \non \\
&=&2|p|^{\alpha}(\mathcal{F}f)(p), \quad \textrm{for} \quad p\in {\mathbb R}.
\label{sec1 : eq19}
\end{eqnarray}
The Fourier transform $\mathcal{F}: \, S({\mathbb R})\rightarrow S({\mathbb R})$ is bijective and since $S({\mathbb R})$ is a dense subspace of $L^2({\mathbb R})$ it can be extended continuously to $L^2({\mathbb R})$ in a unique manner \cite{Ref4a}. This extension will still be denoted by $\mathcal{F}$ and satisfies Plancherel's equation 
\begin{eqnarray}
<\mathcal{F}f,\mathcal{F}g>_{L^2}=<f,g>_{L^2}, \quad \textrm{for all} \quad f,g\in L^2({\mathbb R}).
\label{sec1 : eq19a}
\end{eqnarray}
Thus $\mathcal{F}$ is a unitary operator and
\begin{eqnarray}
\hat{\mathcal{P}}^{\alpha}f=2\mathcal{F}^{-1}(|p|^{\alpha}\mathcal{F}f).
\label{sec1 : eq18b}
\end{eqnarray}
The multiplication operator $M_{2|p|^{\alpha}}$ defined by 
\begin{eqnarray}
M_{2|p|^{\alpha}}f &:=& 2|p|^{\alpha}\cdot f \quad \textrm{for all $f$ in the domain}  \non \\ \mathcal{D}(M_{2|p|^{\alpha}}) &:=& \{f \in L^2({\mathbb R}): \, 2|p|^{\alpha}\cdot f \in L^2({\mathbb R}) \} 
\label{sec1 : eq18c}
\end{eqnarray}
is unitarily equivalent to $\hat{\mathcal{P}}^{\alpha}$ according to \ref{sec1 : eq18b}. Since $M_{|p|^{\alpha}}$ is self-adjoint we conclude that $\hat{\mathcal{P}}^{\alpha}$ is also self-adjoint \footnote{An alternative way to prove self-adjointness would be to show that $( \hat{\mathcal{P}}^{\alpha})^* \subset \hat{\mathcal{P}}^{\alpha}$ which combined with Lemma \ref{sec1 : lem2} results in $\hat{\mathcal{P}}^{\alpha}=(\hat{\mathcal{P}}^{\alpha})^*$.}. As a result $\hat{\mathcal{P}}^{\alpha}=(\hat{\mathcal{P}}^{\alpha})^*$ and the Hilbert-adjoint $(\hat{\mathcal{P}}^{\alpha})^*$ is a closed operator. Thus $\hat{\mathcal{P}}^{\alpha}$ is also closed. 

%%%%%%%%%%%%%%%%%%%%%%%%%%%%%%%%%%%%%%%%%%%%%%%%%%%%%%%%%%%%%%%%%%%%%%%%%%%%%%%%%%%%%%%%%%%%%%%%%%%%%%%%%%%%%
\section{The free Green's function $G_{\alpha}(x,y;\kappa)$ and its derivative $\partial G_{\alpha}(x,y;\kappa)/\partial \kappa$}
\label{sec2}

The FGF $G_{\alpha}(x,y;\kappa)$ regarded as a distribution with respect to x and considering y as parameter, obeys \footnote{The case of $\kappa=0$ produces the Green's function 
\begin{displaymath}G_{\alpha}(x-y;\kappa=0)=\frac{1}{\pi K_{\alpha}|x-y|}\Gamma(1-\alpha)\cos\left(\frac{\pi}{2}(1-\alpha)\right), \quad 0<\alpha<1. \end{displaymath}}
\begin{eqnarray}
&& \left(K_{\alpha}\hat{\mathcal{P}_x}^{\alpha}+\kappa^2 \right)G_{\alpha}(x,y;\kappa)=\delta(x-y), \quad E_0=\kappa^2\neq 0 \non \\
&\textrm{with}& \quad \lim_{|x|\rightarrow \infty}G_{\alpha}(x,y;\kappa)=0.
\label{sec2 : eq6}
\end{eqnarray} 
To solve \ref{sec2 : eq6} we Fourier transformed it 
\begin{eqnarray}
\int_{-\infty}^{+\infty}e^{-ipx} \left[\left(K_{\alpha}|p|^{\alpha}+\kappa^2\right)\tilde{G}_{\alpha}(p,y;\kappa)-e^{ipy}\right]dp=0,
\label{sec2 : eq7}
\end{eqnarray}
and the solution is 
\begin{eqnarray}
\tilde{G}_{\alpha}(p,y;\kappa)=\frac{e^{ipy}}{K_{\alpha}|p|^{\alpha}+\kappa^2}+f(|p|)\delta(p).
\label{sec2 : eq8}
\end{eqnarray}
The function $f$ is an arbitrary polynomial of $|p|$ with positive lowest order. The second term in \ref{sec2 : eq8} does not contribute to $G_{\alpha}(x,y;\kappa)$ since $\int_{-\infty}^{\infty}e^{-ip(x-y)}f(|p|)\delta (p)=f(0)=0$. Also since $\frac{1}{K_{\alpha}|p|^{\alpha}+\kappa^2}\in L^2({\mathbb R})\cap L_1({\mathbb R})$ the function $G_{\alpha}(x-y;\kappa)$ is continuous and belongs to $L^2({\mathbb R})$.  The FGF is given by the Fourier-Laplace transformation
\begin{eqnarray}
G_{\alpha}(x-y;\kappa)&=&\frac{1}{2\pi}\int_{-\infty}^{+\infty}e^{-ipx}\tilde{G}(p,y;\kappa) dp=\frac{1}{2\pi}\int_{-\infty}^{+\infty} \frac{e^{-ip(x-y)}}{K_{\alpha}|p|^{\alpha}+\kappa^2}dp \non \\
&=& \frac{1}{2\pi}\int_{-\infty}^{+\infty} e^{-ip(x-y)}\left(\int_{0}^{\infty} e^{-s(K_{\alpha}|p|^{\alpha}+\kappa^2)} ds \right) dp \non \\
&=& \frac{1}{2\pi (K_{\alpha})^{\frac{1}{\alpha}}}\int_{0}^{\infty} e^{-s\kappa^2} s^{-\frac{1}{\alpha}}\left(\int_{-\infty}^{+\infty} e^{-|u|^{\alpha}+i(sK_{\alpha})^{-\frac{1}{\alpha}}(x-y)u}du \right) ds.
\label{sec2 : eq9}
\end{eqnarray} 
In the derivation of \ref{sec2 : eq9} it is eligible to interchange the order of integration by applying Fubini's theorem since the integrand is continous, thus measurable, and the integral is absolutely convergent 
\begin{eqnarray}
\int_0^{\infty}\!\!\left(\int_{-\infty}^{\infty}\left|e^{-ip(x-y)}e^{-s(K_{\alpha}p^{\alpha}+\kappa^2)}\right|ds\right) dp &\leq& \frac{2\kappa^{2\left(\frac{1}{\alpha}-1\right)}}{\alpha (K_{\alpha})^{\frac{1}{\alpha}}} B\left(\frac{1}{\alpha},\frac{\alpha -1}{\alpha} \right)<\infty, \,\, 1<\alpha\leq 2.
\label{sec2 : eq10}
\end{eqnarray}
The condition $1<\alpha \leq 2$ is implied by the integral representation of the beta function.
\par In the last equality of \ref{sec2 : eq9} the integral in the parenthesis represents the Fourier transform of the Fox's H-function with well-known properties \cite{Ref5,Ref6}. We insert into \ref{sec2 : eq9} the contour integral representation of $e^{-u^{\alpha}}$ obtained by expressing it as the inverse Mellin transform of $\Gamma\left(\frac{s}{\alpha}\right)$ (see Appendix B for the proof)
\begin{eqnarray}
e^{-u^{\alpha}}=\frac{1}{2\pi i \alpha}\int_{c-i\infty}^{c+i\infty}u^{-s}\Gamma \left(\frac{s}{\alpha}\right)ds, \quad c>0
\label{sec2 : eq11}
\end{eqnarray} 
and obtain
\begin{eqnarray}
G_{\alpha}(x-y;\kappa)&=&\frac{1}{2\alpha \pi^2 i}\frac{1}{\left(K_{\alpha}\right)^{\frac{1}{\alpha}}\kappa^{2\left(1-\frac{1}{\alpha}\right)}} \int_{c-i\infty}^{c+i\infty}\Gamma(1-s)\Gamma\left(\frac{s}{\alpha}\right)\Gamma\left(1-\frac{s}{\alpha}\right)\sin \left(\frac{s\pi}{2}\right)z^{s-1}ds, \non \\
&& \,\, \textrm{where} \,\, 0<c<1 \,\, \textrm{and} \,\, z=\left(\frac{|x-y|^{\alpha}\kappa^2}{K_{\alpha}}\right)^{\frac{1}{\alpha}}. 
\label{sec2 : eq12}
\end{eqnarray}
In writing \ref{sec2 : eq12} we used the integral formula [3.761.9] of \cite{Ref9} 
\begin{eqnarray}
\int_0^{\infty}x^{\mu-1} \cos(zx)dx=\frac{\Gamma(\mu)}{z^{\mu}}\cos \left(\frac{\mu \pi}{2}\right), \,\, z>0, \, 0<\textrm{Re} \mu<1,
\label{sec2 : eq11a}
\end{eqnarray} 
and the definition of the gamma function. 
\par We proceed by investigating the asymptotic behaviour of \ref{sec2 : eq12} using Cauchy's residue theorem on a rectangular contour $\mathcal{L}$ described counterclockwised (positive direction) on the complex s-plane. 
\begin{figure}[!ht]
\centering
\includegraphics[147,489][542,779]{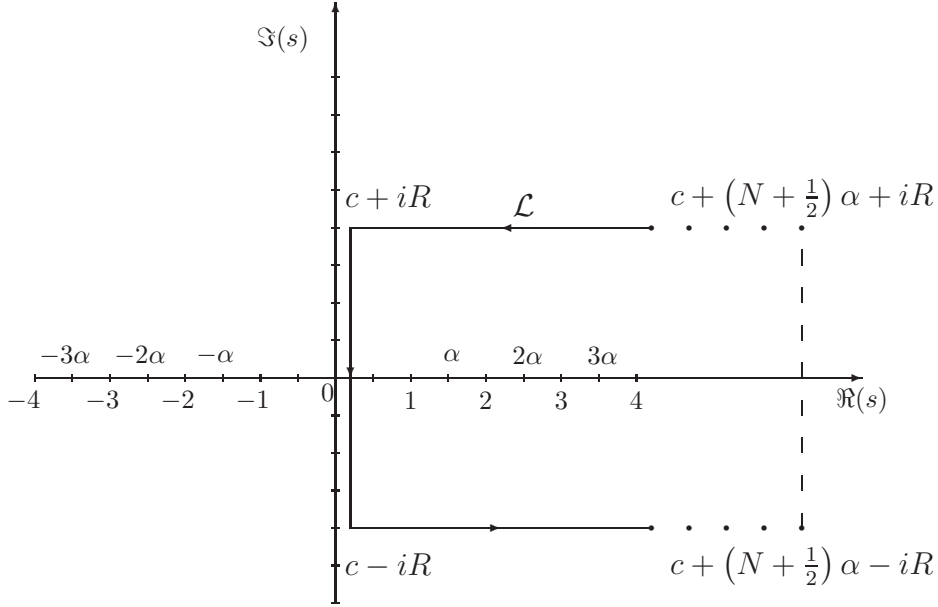}
\caption{The rectangular contour $\mathcal{L}$ on the complex $s$-plane. The sum of the residues of the integrand at the simple poles located at the positive real axis provide the asymptotic behaviour of the free Green's function in the neighbourhood of the point $z=0$.}
\label{sec2 : fig1}
\end{figure}
We study first the case when index $\alpha$ is irrational. The integrand 
\begin{eqnarray}
f(s)=\Gamma(1-s)\Gamma\left(\frac{s}{\alpha}\right)\Gamma\left(1-\frac{s}{\alpha}\right)\sin \left(\frac{s\pi}{2}\right)z^{s-1}
\label{sec2 : eq13}
\end{eqnarray}
in the positive real axis has simple poles given by the union of the simple poles of gamma functions $\Gamma(1-s)$ and $\Gamma(1-\frac{s}{\alpha})$ \footnote{In principle one should have included the poles of $\Gamma(\frac{s}{\alpha})$ but these have already been encountered in those of $\Gamma(1-\frac{s}{\alpha})$ excluding the zero pole. The reason is the cancelation by the zero of sine function.}
\begin{eqnarray}
\mathcal{P}_+ = \{s_n=1+n, \,\, n=0,1,\cdots \}\cup \{s_m=m\alpha, \,\, m=1,2, \cdots\}.
\label{sec2 : eq14}
\end{eqnarray}
Special care should be paid on those poles at which the sine function vanishes. The series representation of the gamma function $\Gamma(z)$ near the poles $z=0,-1,-2,\cdots$ is
\begin{eqnarray}
\Gamma(z)\simeq\frac{(-1)^n}{n! (z+n)}+\frac{(-1)^n \psi(n+1)}{n!}+O(z+n), \quad z\rightarrow -n, \,\, n\in {\mathbb N}\cup \{0\}.
\label{sec2 : eq15}
\end{eqnarray}
where $\psi(x)=\frac{d \ln \Gamma(x)}{dx}$ and for $n\in {\mathbb N}$ it takes the values 
\begin{eqnarray}
\psi(n+1)=-\gamma+\sum^{n}_{p=1}\frac{1}{p}.
\label{sec2 : eq16}
\end{eqnarray}
In \ref{sec2 : eq16} $\gamma$ is Euler's constant. Thus, combining \ref{sec2 : eq15} with the Taylor expansion of sine function around a given zero we can determine whether the pole is cancelled by the zero or not. For the even poles $s_k=2k, \,\, k\in {\mathbb N}$ we have
\begin{eqnarray}
\Gamma(1-s)\sin\left(\frac{s\pi}{2}\right) &=& \Gamma(1-s)\cos\left(\frac{\pi}{2}(1-s)\right) \non \\ 
&\simeq& \frac{(-1)^k}{(2k-1)!}\frac{\pi}{2}+O(2k-s), \,\, 1-s\rightarrow -(2k-1). 
\label{sec2 : eq17}
\end{eqnarray} 
Therefore the even poles of $\Gamma(1-s)$ have all been cancelled out by the zeroes of $\sin(s\pi/2)$. The set \ref{sec2 : eq14} is finally reduced to
\begin{eqnarray}
\mathcal{P}_{+,red.}= \{s_n=2n+1, \,\, n=0,1,\cdots \}\cup \{s_m=m\alpha, \,\, m=1,2, \cdots\}.
\label{sec2 : eq18}
\end{eqnarray}
If we close the contour $\mathcal{L}$ on the right-hand side then the asymptotic behaviour of $G(x,y;\kappa)$ near the point $z=0$ (or the small distance behaviour) is given by
\begin{eqnarray}
G_{\alpha}^R(x-y;\kappa)&=& \frac{1}{2\alpha \pi^2 i}\frac{1}{\left(K_{\alpha}\right)^{\frac{1}{\alpha}}\kappa^{2\left(1-\frac{1}{\alpha}\right)}} (2\pi i)\sum_i \textrm{Res}(f(s_i), s_i\in \mathcal{P}_{+,red.}) \non \\
&=&\frac{1}{\alpha \left(K_{\alpha}\right)^{\frac{1}{\alpha}}}\frac{1}{\kappa^{2\left(1-\frac{1}{\alpha}\right)}} \Biggl[ \sum_{m=0}^{\infty}(-1)^m \frac{1}{\Gamma(2m+1)\sin\left(\frac{(2m+1)\pi}{\alpha}\right)} z^{2m} \non \\
&+& \frac{\alpha}{2} \sum_{m=1}^{\infty} \frac{(-1)^{m+1}}{\Gamma(\alpha m) \cos\left(\frac{m\alpha \pi }{2}\right)} z^{\alpha m -1} \Biggr].
\label{sec2 : eq19}
\end{eqnarray}
The second series was simplified by applying the identity $\Gamma(z)\Gamma(-z)=-\frac{\pi}{z\sin (\pi z)}$. Using the ratio test we can check that the two series absolutely converge. Indeed
\begin{eqnarray}
\rho_1&=& \lim_{m\rightarrow \infty} \left|\frac{\Gamma(2m+1)\sin\left(\frac{(2m+1)\pi}{\alpha}\right)}{\Gamma(2m+3)\sin\left(\frac{(2m+3)\pi}{\alpha}\right)}\right|\leq \lim_{m\rightarrow \infty}\left|\frac{1}{(2m+1)(2m+2)\sin\left(\frac{(2m+3)\pi}{\alpha}\right)}\right|=0 \non \\
\rho_2 &\leq& \lim_{m\rightarrow \infty} 
\left|\frac{\Gamma(\alpha m)}{\Gamma(\alpha (m+1)) \cos\left(\frac{(m+1)\alpha \pi }{2}\right)} \right|=0. 
\label{sec2 : eq20}
\end{eqnarray}
The series absolutely converge for every value of $z$ in the interval $(-R,R)=(-\infty,\infty)$. Note that the asymptotic expansion of Green's function around the point $z=0$ consists of two power series in $z$, one with positive integer and another with non-negative real powers. Also, \ref{sec2 : eq19} in the $\kappa \rightarrow 0$ limit decomposes into a singular part, a constant $\kappa$-independent term and a regular vanishing part according to
\begin{eqnarray}
G_{\alpha}^R(x-y;\kappa) &=& \frac{1}{\alpha \left(K_{\alpha}\right)^{\frac{1}{\alpha}}}\frac{1}{\kappa^{2\left(1-\frac{1}{\alpha}\right)}} \frac{1}{\sin\left(\frac{\pi}{\alpha}\right)}+\frac{1}{2 K_{\alpha}\Gamma(\alpha) \cos\left(\frac{\alpha \pi}{2}\right)}|x-y|^{\alpha-1} \non \\
&+& \frac{1}{\alpha \left(K_{\alpha}\right)^{\frac{1}{\alpha}}}\frac{1}{\kappa^{2\left(1-\frac{1}{\alpha}\right)}} \Biggl[ \sum_{m=1}^{\infty}(-1)^m \frac{1}{\Gamma(2m+1)\sin\left(\frac{(2m+1)\pi}{\alpha}\right)} z^{2m} \non \\
&+& \frac{\alpha}{2} \sum_{m=2}^{\infty} \frac{(-1)^{m+1}}{\Gamma(\alpha m) \cos\left(\frac{m\alpha \pi }{2}\right)} z^{\alpha m -1} \Biggr].
\label{sec2 : eq20a}
\end{eqnarray}
Expression \ref{sec2 : eq20a} makes sense only for irrational values of $\alpha$ since otherwise the denominators of the sums become infinite for specific values of $m$. 
\par The asymptotic behaviour of $G_{\alpha}(x,y;\kappa)$ in the neighbourhood of the poit $z=\infty$ (or the long distance behaviour) is derived by closing the contour on the left-hand side  
\begin{eqnarray}
G_{\alpha}^L(x-y;\kappa)= \frac{1}{\left(K_{\alpha}\right)^{\frac{1}{\alpha}}}\frac{1}{\kappa^{2\left(1-\frac{1}{\alpha}\right)}}\frac{\alpha}{\pi} \sum_{m=1}^{\infty}(-1)^{m+1} \Gamma(1+m\alpha) \sin \left(\frac{m\alpha \pi}{2}\right)  z^{-(m\alpha+1)}.
\label{sec2 : eq21}
\end{eqnarray}
Each term of the series \ref{sec2 : eq21}, for fixed m, in the $|x|\rightarrow \infty$ limit vanishes.  Even for big values of m, using Stirling's asymptotic formula \ref{apA : eq3}, one can show that for $m \leq |x|$ the corresponding term still vanishes. Thus the Green's function fulfills the boundary condition. Actually, the boundary condition is implied by the Riemann-Lebesque theorem according to which if $u\in L_1$ then $(\mathcal{F}u)(y) \rightarrow 0$ in ${\mathbb C}$ when $|y|\rightarrow \infty$.  
\par For rational values of $\alpha=\frac{p}{q}$ with $q<p<2q$ and $p,q\in {\mathbb N}$, the calculation becomes cumbersome since there are poles from different gamma's which are also zeroes of sine function and the result is $\alpha$-sensitive \footnote{What we mean by this is the following: it is impossible to find a closed expression that valids for all rational values of $\alpha$.}. 
\par We can check our computation for $\alpha=2$ starting from \ref{sec2 : eq18}. The residues of $f(s)$ at $s_k=2k, \, \, k\in {\mathbb N}$ are \footnote{The same result can be verified directly from \ref{sec2 : eq19} by taking the $\alpha \uparrow 2$ limit 
\begin{displaymath}
\lim_{\alpha \uparrow 2}G_{\alpha}(x,y;\kappa)=G_2(x,y;\kappa).
\end{displaymath}
}
\begin{eqnarray}
\textrm{Res}(f(s_k),s_k=2k)=\frac{(-1)^k}{\Gamma(2k)}\frac{\pi}{2}\Gamma(k)\frac{(-1)^{k-1}}{\Gamma(k)}z^{2k-1}=-\frac{\pi}{2}\frac{1}{\Gamma(2k)}z^{2k-1}.
\label{sec2 : eq22}
\end{eqnarray}
The total contribution is then
\begin{eqnarray}
G_{\alpha=2}^R(x-y,\kappa)=\frac{1}{2\sqrt{K_2}}\frac{1}{\kappa} e^{-\frac{|x-y|\kappa}{\sqrt{K_2}}}
\label{sec2 : eq23}
\end{eqnarray}
which agrees with the expected result in one dimension. The contribution of \ref{sec2 : eq21} from the left sector vanishes due to the sine function and this implies that the boundary condition is satisfied automatically.
\par Next we examine the asymptotic behaviour of $\partial G_{\alpha}(x,y;\kappa)/\partial \kappa$ near the point $z=0$. Following similar steps as to the case of the FGF we obtain 
\begin{eqnarray}
\frac{\partial G_{\alpha}(x-y;\kappa)}{\partial \kappa}&=&-\frac{1}{\alpha \pi^2 i} \frac{1}{\left(K_{\alpha}\right)^{\frac{1}{\alpha}}\kappa^{3-\frac{2}{\alpha}}} \int_{c-i\infty}^{c+i\infty}\Gamma(1-s)\Gamma\left(\frac{s}{\alpha}\right)\Gamma\left(2-\frac{s}{\alpha}\right)\sin \left(\frac{s\pi}{2}\right)z^{s-1}ds, \non \\
&& \,\, \textrm{where} \,\, 0<c< 1 \,\, \textrm{and} \,\, z=\left(\frac{|x-y|^{\alpha}\kappa^2}{K_{\alpha}}\right)^{\frac{1}{\alpha}}.
\label{sec2 : eq24}
\end{eqnarray}
The reduced set of simple poles in the positive real axis $\Re(s)$ is then 
\begin{eqnarray}
\mathcal{P}_{+,red.}= \{s_n=2n+1, \,\, n=0,1,\cdots \}\cup \{s_m=m\alpha, \,\, m=2, \cdots\}.
\label{sec2 : eq25}
\end{eqnarray}
Note that the pole $s_1=\alpha$ is absent in this case while in the negative real axis $s_{-1}=-\alpha$ is present. Applying Cauchy's residue theorem using the rectangular contour $\mathcal{L}$ on the right-hand side we find
\begin{eqnarray}
\frac{\partial G_{\alpha}^R(x-y;\kappa)}{\partial \kappa}&=& -\frac{2}{\alpha \left(K_{\alpha}\right)^{\frac{1}{\alpha}} \kappa^{3-\frac{2}{\alpha}}}\Biggl[ \sum_{m=0}^{\infty} \frac{(-1)^m}{\Gamma(2m+1)\sin\left(\frac{(2m+1)\pi}{\alpha}\right)}\left(1-\frac{2m+1}{\alpha}\right) z^{2m} \non \\
&+& \frac{\alpha}{2} \sum_{m=2}^{\infty} \frac{(-1)^{m+1}}{\Gamma(\alpha m) \cos\left(\frac{m\alpha \pi }{2}\right)}(1-m) z^{\alpha m -1} \Biggr].
\label{sec2 : eq26}
\end{eqnarray} 
The singular term is
\begin{eqnarray}
\frac{\partial G_{\alpha, \textrm{sing.}}^R(x-y;\kappa)}{\partial \kappa}=-\frac{2}{\alpha \left(K_{\alpha}\right)^{\frac{1}{\alpha}} \kappa^{3-\frac{2}{\alpha}}}\frac{1}{\sin\left(\frac{\pi}{\alpha}\right)}\left(1-\frac{1}{\alpha}\right)
\label{sec2 : eq27}
\end{eqnarray}
and all other terms vanish in the $\kappa \downarrow 0$ limit. 
%%%%%%%%%%%%%%%%%%%%%%%%%%%%%%%%%%%%%%%%%%%%%%%%%%%%%%%%%%%%%%%%%%%%%%%%%%%%%%%%%%%%%%%%%%%%%%%%
\section{Representation of the Birman-Schwinger operator for the Schr\"{o}dinger operator}
\label{sec3}

In the present work we consider the Schr\"{o}dinger operator 
\begin{eqnarray}
\hat{H}=K_{\alpha}\hat{\mathcal{P}}^{\alpha}-g |\hat{V}|
\label{sec3 : eq01}
\end{eqnarray}
in one dimension acting on the space $L^2({\mathbb R})$. The constant $K_{\alpha}$ has dimensions $[K_{\alpha}]=\frac{[M][L]^{\alpha+2}}{[T]^2}$, the nonlocal fractional operator $\hat{\mathcal{P}}^{\alpha}$ is defined by \ref{sec1 : eq2} and the `coupling constant' $g$ is real and takes values in the interval $(0,1]$. 
\par $\hat{V}$ is the multiplication operator with action and domain defined by
\begin{eqnarray}
\hat{V}f &:=& V\cdot f, \quad f\in \mathcal{D}(\hat{V}) \non \\
\mathcal{D}(\hat{V}) &:=& \{f\in L^2({\mathbb R}): \, V\cdot f\in L^2({\mathbb R})\}.
\label{sec3 : eq02}
\end{eqnarray}
This operator is usually referred as the potential. In the sequel we shall assume that the potential $V(x)$ satisfies the conditions 
\begin{align}
V(\cdot)  & \in C({\mathbb R}), \non \\
V(x)<0,   & \,\, \forall x\in {\mathbb R}, \non \\
\lim_{|x|\rightarrow \infty}V(x) &= 0.
\label{sec3 : eq1}
\end{align}
The operator $\hat{H}$ is self-adjoint, acts on the Hilbert space $\mathcal{H}$, with domain  $\mathcal{D}(\hat{H})=\mathcal{D}(\mathcal{\hat{\mathcal{P}}^{\alpha}})\subseteq \mathcal{D}(\hat{V})$ densely defined in $\mathcal{H}$.
\par The eigenvalue problem for bound states is  
\begin{eqnarray}
\left(K_{\alpha}\hat{\mathcal{P}}^{\alpha}-g |V(x)| \right)\psi(x)=-E\psi(x), \quad g\in(0,1], \,\, E\in {\mathbb R}_+/\{0\}, \,\, V\in{\mathbb R}.
\label{sec3 : eq03}
\end{eqnarray}
The solution of the inhomogeneous equation \ref{sec3 : eq03} is given by
\begin{eqnarray}
\psi(x)=g \int G(x,y;\kappa)|V(y)|\psi(y)dy.
\label{sec3 : eq2}
\end{eqnarray}
An equivalent way to determine the solution would be the Birman-Schwinger transformation according to which if $\psi\in \mathcal{D}(\hat{H})\subset L^2({\mathbb R})$ solves \ref{sec3 : eq03}, then  $\phi(x)=\sqrt{|V(x)|}\psi(x)$ solves equation \cite{Ref2,Ref3,Ref4}
\begin{eqnarray}
\left(I-g \hat{D}_{\alpha}\right)\phi(x)=0.
\label{sec3 : eq3}
\end{eqnarray}
Equation \ref{sec3 : eq3} can be casted into the form
\begin{eqnarray}
\frac{1}{g}\hat{I}\phi(x)=\left(\hat{D}_{\alpha}\phi\right)(x)=\int D_{\alpha}(x,y;\kappa)\phi(y) dy =-\int \sqrt{|V(x)|} G_{\alpha}(x,y;\kappa) V^{\frac{1}{2}}(y) \phi(y) dy
\label{sec3 : eq4}
\end{eqnarray} 
where $\hat{I}$ is the identity operator in $\mathcal{D}(\hat{H})$. The function $D_{\alpha}(x,y;\kappa)$ in \ref{sec3 : eq4} is the integral kernel of the Birman-Schwinger operator
\begin{eqnarray}
\hat{D}_{\alpha}=-\sqrt{|\hat{V}|}\left(K_{\alpha}\hat{\mathcal{P}}^{\alpha}+E\right)^{-1}  \hat{V}^{\frac{1}{2}}, 
\label{sec3 : eq5}
\end{eqnarray}  
with $V^{\frac{1}{2}}(x)=\sqrt{|V(x)|} \, \textrm{sign}(V)=-\sqrt{|V(x)|}$. 
\par In what follows we shall need the following statement \cite{Ref5a}. 
\begin{proposition}
The Birman-Schwinger operator $\hat{D}_{\alpha}$ is compact for $E\in {\mathbb R}_+/\{0\}$.
\label{sec3 : prop1} 
\end{proposition}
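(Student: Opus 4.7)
The plan is to prove the stronger property that $\hat{D}_\alpha$ is Hilbert--Schmidt, from which compactness is automatic. From \ref{sec3 : eq4} the integral kernel reads
\begin{eqnarray}
D_\alpha(x,y;\kappa)=\sqrt{|V(x)|}\,G_\alpha(x-y;\kappa)\,\sqrt{|V(y)|}, \non
\end{eqnarray}
the explicit minus sign in \ref{sec3 : eq4} having been absorbed through $V^{1/2}(y)=-\sqrt{|V(y)|}$, which is a consequence of the strict negativity in \ref{sec3 : eq1}. It therefore suffices to show $D_\alpha\in L^2(\mathbb{R}^2)$.

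Next I would estimate the Hilbert--Schmidt norm by Fubini and the translation $u=x-y$:
\begin{eqnarray}
\|\hat{D}_\alpha\|_{HS}^2=\int_{\mathbb{R}^2}|V(x)|\,|G_\alpha(x-y;\kappa)|^2\,|V(y)|\,dx\,dy=\int_{\mathbb{R}}|G_\alpha(u;\kappa)|^2\,C_{|V|}(u)\,du, \non
\end{eqnarray}
where $C_{|V|}(u):=\int_{\mathbb{R}}|V(y+u)|\,|V(y)|\,dy\leq \|V\|_{L^2}^2$ by the Cauchy--Schwarz inequality and translation invariance of the Lebesgue measure. Hence
\begin{eqnarray}
\|\hat{D}_\alpha\|_{HS}^2\leq \|V\|_{L^2}^2\,\|G_\alpha(\cdot;\kappa)\|_{L^2}^2. \non
\end{eqnarray}

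To close the argument I would invoke $G_\alpha(\cdot;\kappa)\in L^2(\mathbb{R})$. By \ref{sec2 : eq8} its Fourier transform $\tilde{G}_\alpha(p;\kappa)=\bigl(K_\alpha|p|^\alpha+\kappa^2\bigr)^{-1}$ lies in $L^2(\mathbb{R})$ because the denominator grows like $|p|^{2\alpha}$ at infinity with $2\alpha>1$ throughout the range $1<\alpha\leq 2$, while the hypothesis $\kappa^2>0$ rules out any singularity at the origin; Plancherel's equation \ref{sec1 : eq19a} then delivers $G_\alpha\in L^2(\mathbb{R})$. Combined with the hypothesis that $V$ is continuous and tends to zero sufficiently fast at infinity (interpreted concretely as $V\in L^2(\mathbb{R})$), this gives $\|\hat{D}_\alpha\|_{HS}<\infty$, so $\hat{D}_\alpha\in\mathcal{S}_2$ and in particular is compact.

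The only real obstacle is to fix a quantitative decay hypothesis on $V$ sufficient to guarantee $V\in L^2(\mathbb{R})$, since the qualitative statement \ref{sec3 : eq1} by itself is not strong enough for the Cauchy--Schwarz bound above to close. Once this quantitative decay is accepted, the estimate is the classical Birman--Schwinger Hilbert--Schmidt computation and does not require the fine asymptotic expansions \ref{sec2 : eq20a} and \ref{sec2 : eq21}; those would only become necessary if one wished to extract the sharper Schatten-class decomposition of $\hat{D}_\alpha$ announced in the introduction.
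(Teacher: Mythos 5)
Your argument is correct, and it is genuinely different from what the paper does: the paper offers no proof of Proposition \ref{sec3 : prop1} at all, merely citing Glazman \cite{Ref5a}, whereas you supply a self-contained and in fact stronger result, namely that $\hat{D}_{\alpha}$ is Hilbert--Schmidt. Your chain of estimates is sound: the kernel is $\sqrt{|V(x)|}\,G_{\alpha}(x-y;\kappa)\sqrt{|V(y)|}$ (the signs cancel as you say), the Fubini/translation step and the Cauchy--Schwarz bound $C_{|V|}(u)\leq\|V\|_{L^2}^2$ are valid, and the membership $G_{\alpha}(\cdot;\kappa)\in L^2(\mathbb{R})$ for $\kappa^2=E>0$ is exactly what the paper itself records just after \ref{sec2 : eq8}, since $\bigl(K_{\alpha}|p|^{\alpha}+\kappa^2\bigr)^{-1}$ decays like $|p|^{-\alpha}$ with $2\alpha>1$. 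The one hypothesis you must add, $V\in L^2(\mathbb{R})$, which you correctly flag as not literally contained in \ref{sec3 : eq1}, is harmless within the paper's framework: continuity plus decay at infinity make $V$ bounded, and the later standing condition \ref{sec3 : eq6a} gives $V\in L^1$, so $V\in L^1\cap L^{\infty}\subset L^2$. What your route buys is an explicit, elementary quantitative bound $\|\hat{D}_{\alpha}\|_{HS}\leq\|V\|_{L^2}\|G_{\alpha}\|_{L^2}$ in place of an external reference; what it does not give, as you note, is the finer trace-class statement and the singular-plus-finite decomposition of Lemma \ref{sec3 : lem1}, for which the asymptotics \ref{sec2 : eq20a} are indeed needed.
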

\par Every compact operator can be decomposed into a finite rank part and an operator which has small norm. The finite part can be realized as a small perturbation on the singular part, in the $\kappa\downarrow 0$ limit, which comprise the dominant contribution. So we have the following lemma:
\begin{lemma}
\label{sec3 : lem1}
If the potential satisfies the additional condition:
\begin{eqnarray}
\int (1+|x|)^{2(\alpha-1)} |V(x)| dx &<& \infty 
\label{sec3 : eq6a} 
\end{eqnarray}
then the Birman-Schwinger operator $\hat{D}_{\alpha}$ defined by \ref{sec3 : eq5} is of trace class and has the representation
\begin{eqnarray}
\hat{D}_{\alpha}=\hat{D}_{\alpha, \textrm{sing.}}+\hat{D}_{\alpha, \textrm{fin.}}
\label{sec3 : eq7}
\end{eqnarray}
where: 
\begin{description}
\item[($i$)] The operator $\hat{D}_{\alpha, \textrm{sing.}}$ has integral kernel
\begin{eqnarray}
D_{\alpha, \textrm{sing.}}(x,y;\kappa)=-\frac{1}{\alpha (K_{\alpha})^{\frac{1}{\alpha}}\kappa^{2\left(1-\frac{1}{\alpha}\right)}\sin\left(\frac{\pi}{\alpha}\right)}\sqrt{|V(x)|} V(y)^{\frac{1}{2}}.
\label{sec3 : eq8}
\end{eqnarray} 
\item[($ii$)] The operator $\hat{D}_{\alpha, \textrm{fin.}}$ has integral kernel with the estimate
\begin{eqnarray}
\left|D_{\alpha, \textrm{fin.}}(x,y;\kappa)\right| \leq M(\alpha) \sqrt{|V(x)|}|x-y|^{\alpha-1}\sqrt{|V(y)|},
\label{sec3 : eq9}
\end{eqnarray}
where $M(\alpha)\geq M(2)$ and does not depend on $x$ and $\kappa$. Also it belongs to the Hilbert-Schmidt class $\mathcal{S}_2$ and its Hilbert-Schmidt norm is uniformly bounded. 
\end{description}
\end{lemma}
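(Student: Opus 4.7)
The plan is to substitute the near-zero decomposition of the free Green's function (\ref{sec2 : eq20a}) directly into the Birman--Schwinger kernel $D_\alpha(x,y;\kappa)=-\sqrt{|V(x)|}\,G_\alpha(x,y;\kappa)\,V^{1/2}(y)$ and read off (\ref{sec3 : eq7}) term by term. The leading $\kappa$-singular contribution of $G_\alpha$ as $\kappa\downarrow 0$ is the $x,y$-independent constant $\bigl[\alpha K_\alpha^{1/\alpha}\kappa^{2(1-1/\alpha)}\sin(\pi/\alpha)\bigr]^{-1}$, so multiplying by $-\sqrt{|V(x)|}V^{1/2}(y)$ yields exactly the rank-one kernel (\ref{sec3 : eq8}); this identifies $\hat{D}_{\alpha,\textrm{sing.}}$. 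The finite part $\hat{D}_{\alpha,\textrm{fin.}}$ is then the remainder: the dominant $|x-y|^{\alpha-1}$ term of (\ref{sec2 : eq20a}) together with the two absolutely convergent series, all sandwiched between the $\sqrt{|V|}$ factors.

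To obtain (\ref{sec3 : eq9}) uniformly in $\kappa$, I will treat two regimes of $z=(|x-y|^\alpha\kappa^2/K_\alpha)^{1/\alpha}$ separately. For $z\le 1$ I use (\ref{sec2 : eq20a}): the ratio-test bounds (\ref{sec2 : eq20}) show that the remainder after the $|x-y|^{\alpha-1}$ term is dominated by $C_\alpha|x-y|^{\alpha-1}$ with a constant independent of $\kappa$, because each monomial $z^{2m}$ ($m\ge 1$) or $z^{\alpha m-1}$ ($m\ge 2$) carries a strictly positive power of $\kappa$ that remains bounded on $z\le 1$. For $z\ge 1$ the long-distance expansion (\ref{sec2 : eq21}) shows that $G_\alpha$ decays like $z^{-(\alpha+1)}$, so after subtracting the singular constant the remainder is again controlled by $C_\alpha'|x-y|^{\alpha-1}$. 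Taking $M(\alpha)$ to be the maximum of these constants and verifying against the explicit formula (\ref{sec2 : eq23}) in the $\alpha\uparrow 2$ limit gives $M(\alpha)\ge M(2)$.

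For part (ii), Hilbert--Schmidt-ness follows by squaring (\ref{sec3 : eq9}) and using the elementary inequality $|x-y|^{2(\alpha-1)}\le 2^{2(\alpha-1)}\bigl((1+|x|)^{2(\alpha-1)}+(1+|y|)^{2(\alpha-1)}\bigr)$, valid since $2(\alpha-1)\in(0,2)$. The double integral splits into two pieces, each bounded by $\|V\|_{L^1}$ times $\int(1+|x|)^{2(\alpha-1)}|V(x)|\,dx$; both are finite by (\ref{sec3 : eq6a}) (which forces $V\in L^1$ since $(1+|x|)^{2(\alpha-1)}\ge 1$), so $\|\hat{D}_{\alpha,\textrm{fin.}}\|_{\mathcal{S}_2}$ is uniformly bounded in $\kappa$. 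The overall trace-class claim for $\hat{D}_\alpha$ then follows from positivity of the resolvent: $|\hat{D}_\alpha|\le \sqrt{|V|}(K_\alpha\hat{\mathcal{P}}^\alpha+\kappa^2)^{-1}\sqrt{|V|}$, whose trace equals $\int|V(x)|\,G_\alpha(x,x;\kappa)\,dx$, finite via the explicit $x=y$ value read off from (\ref{sec2 : eq19}).

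The main obstacle is the uniform pointwise bound in the second step: the small-$z$ series converges for all $z$ but individual terms grow with $|x-y|$, so extracting a global majorant proportional to $|x-y|^{\alpha-1}$ requires matching the two expansions at $z\sim 1$. This matching is precisely where irrationality of $\alpha$ enters, since at rational values coincident poles of the integrand (\ref{sec2 : eq13}) produce logarithmic corrections that would spoil the clean power-law estimate.
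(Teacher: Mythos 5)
Your proposal follows essentially the same route as the paper: the same rank-one singular kernel read off from the $\kappa$-singular constant in \ref{sec2 : eq20a}, the same pointwise majorant $M(\alpha)\sqrt{|V(x)|}\,|x-y|^{\alpha-1}\sqrt{|V(y)|}$ for the remainder, the same Hilbert--Schmidt bound obtained by squaring and using $|x-y|^{2(\alpha-1)}\le C\bigl((1+|x|)^{2(\alpha-1)}+(1+|y|)^{2(\alpha-1)}\bigr)$ together with \ref{sec3 : eq6a}, and the same diagonal trace computation for the trace-class claim. The only divergence is in how the $\kappa$-uniform pointwise estimate \ref{sec3 : eq9} is justified: the paper writes $G_{\alpha,\textrm{fin.}}$ as the subtracted Fourier integral $\frac{1}{2\pi}\int (e^{-ip(x-y)}-1)(K_{\alpha}|p|^{\alpha}+\kappa^2)^{-1}dp$ and asserts the bound with the explicit constant \ref{sec3 : eq11a}, whereas you match the small-$z$ and large-$z$ expansions at $z\sim 1$, which makes the uniformity in $\kappa$ more transparent at the cost of a less explicit $M(\alpha)$.
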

\begin{proof}\non
\end{proof}
The operator $\hat{D}_{\alpha}$ is of trace class since
\begin{eqnarray}
\textrm{Tr} \left(|\hat{D}_{\alpha}|\right)=\int |D_{\alpha}(x,x;\kappa)| dx <\infty 
\label{sec3 : eq9a}
\end{eqnarray}
by employing \ref{sec2 : eq9} and condition \ref{sec3 : eq6a}.
\begin{description}
\item[($i$)] From the FGF \ref{sec2 : eq20a} we can easily identify the singular term, in the $\kappa \downarrow 0$ limit, to be of rank one and given by \ref{sec3 : eq8}. 
\item[($ii$)] The finite component of the Green's function is also given by
\begin{eqnarray}
G_{\alpha, \textrm{fin.}}(x,y;\kappa) &=& \frac{1}{2\pi}\int_{-\infty}^{\infty} \frac{(e^{-ip(x-y)}-1)}{K_{\alpha}|p|^{\alpha}+\kappa^2}dp. 
\label{sec3 : eq10}
\end{eqnarray}   
The second term in \ref{sec3 : eq10} is resposible for the singular part of the Green's function
\begin{eqnarray}
\frac{1}{2\pi}\int_{-\infty}^{\infty} \frac{dp}{K_{\alpha}|p|^{\alpha}+\kappa^2}= \frac{1}{\alpha (K_{\alpha})^{\frac{1}{\alpha}}\kappa^{2\left(1-\frac{1}{\alpha}\right)}\sin\left(\frac{\pi}{\alpha}\right)}
\label{sec3 : eq10a}
\end{eqnarray}
while the first is given by \ref{sec2 : eq20a}. Thus 
\begin{eqnarray}
\left|D_{\alpha, \textrm{fin.}}(x,y;\kappa) \right| \leq M(\alpha)\sqrt{|V(x)|}|x-y|^{\alpha-1}\sqrt{|V(y)|}  
\label{sec3 : eq11}
\end{eqnarray}
where 
\begin{eqnarray}
M(\alpha)=\frac{1}{2K_{\alpha}\Gamma(\alpha)}\frac{1}{|\cos\left(\frac{\alpha \pi}{2}\right)|}\geq M\left(2\right)=\frac{1}{2 K_{2}} 
\label{sec3 : eq11a}
\end{eqnarray}
since $M$ is a stricly decreasing function of $\alpha$. By definition an operator $\hat{T}$ belongs to the Hilbert-Schmidt class if  
\begin{eqnarray}
\parallel \hat{T} \parallel_{_{HS}}^2\equiv \textrm{Tr}(\hat{T}^* \hat{T})<\infty.
\label{sec3 : eq12}
\end{eqnarray}
We have
\begin{eqnarray}
\parallel \hat{D}_{\alpha, \textrm{fin.}} \parallel_{_{HS}}^2 &\leq& M^2(\alpha) \int \int |V(x)| |x-y|^{2(\alpha-1)}|V(y)|dx \, dy \non \\
&\leq& M^2(\alpha)\int \int |V(x)| (|x|+|y|)^{2(\alpha-1)}|V(y)|dx \, dy
\label{sec3 : eq13}
\end{eqnarray}
which by virtue of condition \ref{sec3 : eq6a} the double integral converges. It is worth noting that condition \ref{sec3 : eq6a} using the $c_p$-inequality 
\begin{eqnarray}
|f+g|^p\leq c_p(|f|^p+|g|^p), \quad c_p=\left\{\begin{array}{ll} 1 & \rm{if} \quad 0<p<1 \\ 2^{p-1} & \rm{for} \quad p\geq 1 \end{array} \right.
\label{sec3 : eq13a} 
\end{eqnarray}
can be written equivallently as
\begin{eqnarray}
\int (1+|x|^{2(\alpha-1)}) |V(x)| dx < \infty, \quad \alpha \in (1,2).
\label{sec3 : eq13b} 
\end{eqnarray}
\end{description}
Using \ref{sec3 : eq13} we observe that $\parallel g\hat{D}_{\alpha, \textrm{fin.}}\parallel_{_{HS}} < 1$. This condition allows the ground state to go to zero as $g\downarrow 0$ for $\alpha \in (\tilde{\alpha},2]$ with $\tilde{\alpha}$ being the unique solution of $M(\alpha)=1$. In this case  one can proceed following analogous steps as in \cite{Ref4} and study the existence and asymptotic behaviour of the ground energy in the small coupling constant limit. 
%%%%%%%%%%%%%%%%%%%%%%%%%%%%%%%%%%%%%%%%%%%%%%%%%%%%%%%%%%%%%%%%%%%%%%%%%%%%%%%%%%%%%%%%%%%%%%%
%%%%%%%%%%%%%%%%%%%%%%%%%%%%%%%%%%%%%%%%%%%%%%%%%%%%%%%%%%%%%%%%%%%%%%%%%%%%%%%%%%%%%%%%%%%%%%%
%%%%%%%%%%%%%%%%%%%%%%%%%%%%%%%%%%%%%%%%%%%%%%%%%%%%%%%%%%%%%%%%%%%%%%%%%%%%%%%%%%%%%%%%%%%%%%%%
\section{Existence and asymptotic behaviour of the ground state}
\label{sec4}

\begin{theorem}
If condition \ref{sec3 : eq6a} is satisfied then for small $g$ there exists an eigenvalue of \ref{sec3 : eq01} if and only if equation
\begin{eqnarray}
\kappa^{2\left(1-\frac{1}{\alpha}\right)}=\frac{g}{\alpha \left(K_{\alpha}\right)^{\frac{1}{\alpha}}\sin\left(\frac{\pi}{\alpha}\right)}<\sqrt{|V|},(1+g\hat{D}_{\alpha, \textrm{fin.}})^{-1}\sqrt{|V|}>\equiv H(g,\kappa,\alpha)
\label{sec4 : eq1}
\end{eqnarray}
has a solution $\kappa^{2\left(1-\frac{1}{\alpha}\right)}>0$ and the eigenvalue is uniquely determined by $E=-\kappa^{4\left(1-\frac{1}{\alpha}\right)}$.
\label{sec4 : th1} 
\end{theorem}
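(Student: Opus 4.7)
The plan is to combine the Birman--Schwinger principle with the rank-one character of the singular piece of $\hat D_\alpha$ in order to collapse the operator eigenvalue problem into the scalar equation (4.1). By Lemma 3.1 I write $\hat D_\alpha = \hat D_{\alpha,\textrm{sing.}}+\hat D_{\alpha,\textrm{fin.}}$ and observe from (3.8) that $\hat D_{\alpha,\textrm{sing.}}$ is rank one, of the form $a(\kappa)\,|\sqrt{|V|}\rangle\langle \sqrt{|V|}|$ with $a(\kappa)=[\alpha K_\alpha^{1/\alpha}\sin(\pi/\alpha)]^{-1}\kappa^{-2(1-1/\alpha)}$ (up to the sign convention $V^{1/2}=-\sqrt{|V|}$). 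For small enough $g$, the uniform Hilbert--Schmidt bound at the end of Section 3 yields $\|g\hat D_{\alpha,\textrm{fin.}}\|_{HS}<1$, so that $1+g\hat D_{\alpha,\textrm{fin.}}$ is boundedly invertible via a convergent Neumann series; denote its inverse by $R(g,\kappa)$.

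Starting from (3.3), I would reorganise the Birman--Schwinger equation as $(1+g\hat D_{\alpha,\textrm{fin.}})\phi = g\hat D_{\alpha,\textrm{sing.}}\phi$ and apply $R(g,\kappa)$ to obtain $\phi = g\,a(\kappa)\,\langle\sqrt{|V|},\phi\rangle\,R(g,\kappa)\sqrt{|V|}$. A nontrivial eigenvector must have $\langle\sqrt{|V|},\phi\rangle\neq 0$; taking the inner product with $\sqrt{|V|}$ and cancelling this scalar reduces the operator equation to precisely (4.1), which handles the ``only if'' direction. For the converse, given a positive $\kappa$ solving (4.1), the vector $\phi_\star := R(g,\kappa)\sqrt{|V|}$ satisfies $g\hat D_\alpha\phi_\star=\phi_\star$ by direct substitution, and the standard reconstruction $\psi_\star := g\bigl(K_\alpha\hat{\mathcal{P}}^\alpha+E\bigr)^{-1}\sqrt{|V|}\phi_\star$ yields a genuine eigenfunction of $\hat H$ at the claimed energy level.

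For the uniqueness of $\kappa$ I would analyse $H(g,\kappa,\alpha)$. Expanding $R(g,\kappa)$ in its Neumann series, the $\kappa$-dependence enters only through $G_{\alpha,\textrm{fin.}}$, whose Fourier representation in (3.10) is manifestly monotone in $\kappa>0$; combined with the uniform Hilbert--Schmidt bound of Lemma 3.1(ii), this shows $H(g,\cdot,\alpha)$ is continuous and bounded, and for sufficiently small $g$ monotone in $\kappa$. Since the left-hand side $\kappa^{2(1-1/\alpha)}$ is strictly increasing from $0$ to $+\infty$, the two curves intersect in at most one point, securing uniqueness of $\kappa$ whenever (4.1) is solvable; the associated eigenvector is unique up to scalar because the rank-one singular part forces $\phi$ into the one-dimensional range of $R(g,\kappa)\sqrt{|V|}$.

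The main obstacle is propagating the correct monotonicity through the alternating-sign Neumann series: the sign of $\partial_\kappa G_{\alpha,\textrm{fin.}}$ is transparent from (3.10) in isolation, but iterated operator products require extra care. The cleanest route, which I would ultimately pursue, is to invoke positivity of the resolvent $(K_\alpha\hat{\mathcal{P}}^\alpha+E)^{-1}$ to deduce that the Birman--Schwinger eigenvalues of $g\hat D_\alpha$ are strictly decreasing functions of $E$; this monotonicity closes both the existence and the uniqueness parts of the argument in a single stroke.
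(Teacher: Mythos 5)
Your reduction of the eigenvalue problem to the scalar equation \ref{sec4 : eq1} is essentially the paper's argument in disguise: the paper exploits the rank-one character of $\hat{D}_{\alpha,\textrm{sing.}}$ through the Fredholm-determinant factorization $\det(1+g\hat{D}_{\alpha})=\det(1+g\hat{D}_{\alpha,\textrm{fin.}})\det(1+(1+g\hat{D}_{\alpha,\textrm{fin.}})^{-1}g\hat{D}_{\alpha,\textrm{sing.}})$ together with $\det(1+A)=1+\textrm{Tr}(A)$ for rank-one $A$, while you apply the resolvent $R(g,\kappa)$ and pair with $\sqrt{|V|}$; these are the same computation, and your explicit converse construction of the eigenvector is a welcome addition the paper leaves implicit. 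Where you genuinely diverge is uniqueness. The paper considers two putative solutions $\kappa_1,\kappa_2$, bounds $|H(g,\kappa_1,\alpha)-H(g,\kappa_2,\alpha)|$ by $\int_{\kappa_1}^{\kappa_2}|\partial H/\partial\kappa|\,d\kappa$, and uses the explicit asymptotics of $\partial G_{\alpha}/\partial\kappa$ from \ref{sec2 : eq26}--\ref{sec2 : eq27} to show this derivative vanishes as $\kappa\downarrow 0$, i.e.\ a contraction-type estimate tied to the machinery of Section 2. You instead invoke monotonicity: your first attempt, reading the sign of $\partial_{\kappa}G_{\alpha,\textrm{fin.}}$ off \ref{sec3 : eq10}, does not actually work as stated (the numerator $e^{-ip(x-y)}-1$ is not of one sign, so the finite part alone is not manifestly monotone, and the alternating Neumann series compounds the problem --- you rightly flag this), but your fallback --- positivity of the kernel of $(K_{\alpha}\hat{\mathcal{P}}^{\alpha}+E)^{-1}$ and strict monotone decrease of the Birman--Schwinger eigenvalues of the \emph{full} operator $g\hat{D}_{\alpha}$ in $E$ --- is the classical route and is sound; it yields uniqueness of the eigenvalue directly, without needing the $\partial G_{\alpha}/\partial\kappa$ asymptotics at all. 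The trade-off is that the paper's method, though more computational, produces quantitative information (the order-$g^2$ expansion \ref{sec4 : eq4} and the rate at which $\partial H/\partial\kappa\to 0$), whereas your monotonicity argument is cleaner and more robust but purely qualitative.
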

\begin{proof}
\label{sec4 : pr1} 
\end{proof}
The operator $1+g\hat{D}_{\alpha, \textrm{fin.}}$ is inverible and from \ref{sec3 : eq3}, $-1$ is an eigenvalue of $\hat{D}_{\alpha}$ if and only if
\begin{eqnarray}
\det(1+g\hat{D}_{\alpha})=\det(1+g\hat{D}_{\alpha, \textrm{fin.}}) \det(1+(1+g\hat{D}_{\alpha, \textrm{fin.}})^{-1}g\hat{D}_{\alpha, \textrm{sing.}})=0.
\label{sec4 : eq2}
\end{eqnarray}
Equation \ref{sec4 : eq2} leads us to the conclusion that
\begin{eqnarray}
\det(1+(1+g\hat{D}_{\alpha, \textrm{fin.}})^{-1}g\hat{D}_{\alpha, \textrm{sing.}})=1+\textrm{Tr}((1+g\hat{D}_{\alpha, \textrm{fin.}})^{-1}g\hat{D}_{\alpha, \textrm{sing.}})=0.
\label{sec4 : eq3}
\end{eqnarray}
The first equality holds since $(1+\hat{D}_{\alpha, \textrm{fin.}})^{-1}\hat{D}_{\alpha, \textrm{sing.}}$ is of rank one. Equation \ref{sec4 : eq3} implies \ref{sec4 : eq1} which to order $g^2$ becomes
\begin{eqnarray}
\kappa^{2\left(1-\frac{1}{\alpha}\right)} &=& \frac{g}{\alpha \left(K_{\alpha}\right)^{\frac{1}{\alpha}}\sin\left(\frac{\pi}{\alpha}\right)} \parallel \sqrt{|V|}\parallel^2 \non \\
&+& \frac{g^2}{2 K_{\alpha}^{1+\frac{1}{\alpha}}\Gamma(1+\alpha) \sin\left(\frac{\pi}{\alpha}\right)\cos\left(\frac{\alpha \pi}{2}\right)}\int |V(x)||x-y|^{\alpha-1}|V(y)|dx dy +O(g^3).
\label{sec4 : eq4}
\end{eqnarray}
To prove the uniqueness of solutions we consider the existence of two solutions $\kappa_1$ and $\kappa_2$ satisfying equation \ref{sec4 : eq1}. Their distance is
\begin{eqnarray}
|\kappa_1^{2\left(1-\frac{1}{\alpha}\right)}-\kappa_2^{2\left(1-\frac{1}{\alpha}\right)}|=|H(g,\kappa_1,\alpha)-H(g,\kappa_2,\alpha)|\leq \int_{\kappa_1}^{\kappa_2}\left|\frac{\partial H(g,\kappa,\alpha)}{\partial \kappa}\right| d\kappa.
\label{sec4 : eq5}
\end{eqnarray} 
But 
\begin{eqnarray}
\left|\frac{\partial H(g,\kappa,\alpha)}{\partial \kappa}\right| &=& \left|\frac{g^2}{\alpha \left(K_{\alpha}\right)^{\frac{1}{\alpha}}\sin\left(\frac{\pi}{\alpha}\right)}<\sqrt{|V|},(1+g\hat{D}_{\alpha, \textrm{fin.}})^{-1} \frac{\partial \hat{D}_{\alpha, \textrm{fin.}}}{\partial \kappa} (1+g\hat{D}_{\alpha, \textrm{fin.}})^{-1} \sqrt{|V|}>\right| \non \\
&\leq& \frac{g^2}{\alpha \left(K_{\alpha}\right)^{\frac{1}{\alpha}}\sin\left(\frac{\pi}{\alpha}\right)}\left|<\sqrt{|V|},\frac{\partial \hat{D}_{\alpha, \textrm{fin.}}}{\partial \kappa}\sqrt{|V|}>\right| 
\label{sec4 : eq6}
\end{eqnarray}
An explicit calculation, employing \ref{sec2 : eq26}, shows that $\lim_{\kappa \downarrow 0}\left|<\sqrt{|V|},\frac{\partial \hat{D}_{\alpha, \textrm{fin.}}}{\partial \kappa}\sqrt{|V|}>\right|=0$ \footnote{Only for $\alpha=2$ this limit is bounded by 
\begin{displaymath}
\lim_{\kappa \downarrow 0}\left|<\sqrt{|V|},\frac{\partial \hat{D}_{2, \textrm{fin.}}}{\partial \kappa}\sqrt{|V|}>\right|\leq \frac{1}{4}\int |V(x)|(|x|^2+|y|^2) |V(y)|dx dy<\infty. 
\end{displaymath}}, thus we result in $\kappa_1=\kappa_2$.
%%%%%%%%%%%%%%%%%%%%%%%%%%%%%%%%%%%%%%%%%%%%%%%%%%%%%%%%%%%%%%%%%%%%%%%%%%%%%%%%%%%%%%%%%%%%%%%
%%%%%%%%%%%%%%%%%%%%%%%%%%%%%%%%%%%%%%%%%%%%%%%%%%%%%%%%%%%%%%%%%%%%%%%%%%%%%%%%%%%%%%%%%%%%%%%%
\section{Conclusion}
\label{sec5}

\par We have shown that the fractional Weyl operator fulfills certain properties which are of practical importance if one is interested in studying the spectrum of perturbed fractional Schr\"{o}dinger operators. 
\par The next step was to obtain the small and long distance behaviour of the FGF. This was achieved by employing the Fourier-Laplace transform for the Green's function,  adopting a specific contour representation for the function $e^{-u^{\alpha}}$ and applying Cauchy's residue theorem. The asymptotic behaviour in the neighbourhood of the point $z=0$ reveals the terms related to the singular and finite parts of the integral kernel of the Birman-Schwinger operator. We have also determined the asymptotic behaviour of $\partial G_{\alpha}(x,y;\kappa)/\partial \kappa$ which is essential for proving the uniqueness of the ground state. 
\par Considering a Schr\"{o}dinger operator with kinetic term the Weyl operator and a small tunable potential which satisfies certain conditions, we were able to give a representaion for the Birman-Schwinger operator. This representation holds for all irrational values of the index-$\alpha$ in the interval $(1,2)$ and generalizes a previously known result for the first order Laplacian (the $\alpha=2$ case).
\par The existence of the lowest eigenvalue is related to the solution of equation \ref{sec4 : eq1} which is shown to be unique.
\par Finally, from the physical point of view there is a plethora of applications related to the fractional diffusion equation \cite{Ref1a} but not to the Schr\"{o}dinger equation. Only a year ago, the authors of \cite{Ref10} constructed a one-dimensional lattice model with a hopping particle and numerically obtained the eigenvalues and eigenfunctions in a bounded domain with different boundary conditions. In the continuum limit a quantum representation of this model could be a realization of the fractional Schr\"{o}dinger equation.

%%%%%%%%%%%%%%%%%%%%%%%%%%%%%%%%%%%%%%%%%%%%%%%%%%%%%%%%%%%%%%%%%%%%%%%%%%%%%%%%%%%%%%%%%%%%%%%%%%%%%%%%%%%%%
\addcontentsline{toc}{subsection}{Appendix A }
\section*{Appendix A }
\label{apA}
\renewcommand{\theequation}{A.\arabic{equation}}
\setcounter{equation}{0}

To prove \ref{sec1 : eq19} we first observe that
\begin{eqnarray}
\hat{\mathcal{P}}^{\alpha}e^{ipx}&=& \frac{1}{\cos(\frac{\pi \alpha}{2})}\frac{(ip)^{[\alpha]+1}}{\Gamma([\alpha]+1-\alpha)}e^{ipx}\int_{0}^{\infty}\frac{e^{-ipu}+(-1)^{[\alpha]+1}e^{ipu}}{u^{\alpha-[\alpha]}}du
\label{apA : eq1}
\end{eqnarray}
where the function $(\pm ip)^{\alpha}$ is to be understood as 
\begin{eqnarray}
(\pm ip)^{\alpha}=|p|^{\alpha}e^{\pm \frac{i\alpha \pi}{2} sign(p)}.
\label{apA : eq2}
\end{eqnarray}
\noi We distinguish two posibilities depending on whether $[\alpha]$ is even or odd.
\begin{enumerate}
\item[($\alpha$)] Substituting $[\alpha]=2l$ in \ref{apA : eq1} one obtains
\begin{eqnarray}
\hat{\mathcal{P}}^{\alpha}e^{ipx}&=& -\frac{2}{\cos(\frac{\pi \alpha}{2})}\frac{(-1)^{l+1}}{\Gamma(2l+1-\alpha)}|p|^{\alpha}e^{ipx}\int_{0}^{\infty}\frac{\sin(w)}{w^{\alpha-2l}}dw \non \\
&=& 2|p|^{\alpha}e^{ipx}.
\label{apA : eq3}
\end{eqnarray}
\item[($\beta$)] Setting $[\alpha]=2l+1$ in \ref{apA : eq1} yields
\begin{eqnarray}
\hat{\mathcal{P}}^{\alpha}e^{ipx}&=& \frac{2}{\cos(\frac{\pi \alpha}{2})}\frac{(-1)^{l+1}}{\Gamma(2l+2-\alpha)}|p|^{\alpha}e^{ipx}\int_{0}^{\infty}\frac{\cos(w)}{w^{\alpha-(2l+1)}}dw \non \\
&=& 2|p|^{\alpha}e^{ipx}.
\label{apA : eq4}
\end{eqnarray}
\end{enumerate}
We have used successively the integral formula [3.761.4] of \cite{Ref9}  
\begin{displaymath}
\int x^{\mu -1}\sin(zx)dx=\frac{\Gamma(\mu)}{z^{\mu}}\sin\left(\frac{\mu \pi}{2}\right), \quad z>0, \,\, 0<|\textrm{Re}\, \mu|<1 
\end{displaymath}
and \ref{sec2 : eq11a}.
%%%%%%%%%%%%%%%%%%%%%%%%%%%%%%%%%%%%%%%%%%%%%%%%%%%%%%%%%%%%%%%%%%%%%%%%%%%%%%%%%%%%%%%%%%%%%%%%%%%%%%%%%%%%%
\addcontentsline{toc}{subsection}{Appendix B }
\section*{Appendix B }
\label{apB}
\renewcommand{\theequation}{B.\arabic{equation}}
\setcounter{equation}{0}
The inversion formula \ref{sec2 : eq11} can be proved by applying Cauchy's residue theorem. Consider the rectangular contour $\mathcal{L}$ described in the positive sense (counterclockwise) with vertices $c\pm iR$, $c-\left(N+\frac{1}{2}\right)\alpha \pm iR$ where $N$ is a positive integer. The simple poles of $\Gamma\left(\frac{s}{\alpha}\right)$ lie in the interior of the contour at $0,-\alpha,-2\alpha, \cdots -N \alpha$. The residues of $f(s)=u^{-s}\Gamma\left(\frac{s}{\alpha}\right)$ are
\begin{eqnarray}
\textrm{Res}(f(s_j), s_j=-j\alpha)=\alpha \frac{(-1)^j}{\Gamma(j+1)}u^{\alpha j}, \quad j=0,1,\cdots,N.
\label{apB : eq1}
\end{eqnarray} 
Cauchy's theorem then gives
\begin{eqnarray}
\frac{1}{\alpha 2\pi i}\int_{\mathcal{L}}u^{-s} \Gamma\left(\frac{s}{\alpha}\right) ds=\sum_{j=0}^{N}\frac{(-1)^j}{\Gamma(1+j)}\left(u^{\alpha}\right)^j.
\label{apB : eq2}
\end{eqnarray}
Letting $R,N$ tend to infinity the integral on $\mathcal{L}$ minus the line joining $c-iR$ to $c+iR$ tends to zero. This is proved using Stirling's asymptotic formula 
\begin{eqnarray}
\Gamma(z)\approx \sqrt{2\pi} z^{z-\frac{1}{2}}e^{-z} \quad \textrm{as} \quad Rez\rightarrow \infty
\label{apB : eq3}
\end{eqnarray}
and 
\begin{eqnarray}
\left|\Gamma(x+iy)\right|=\sqrt{2\pi}|y|^{x-\frac{1}{2}}e^{-\pi \frac{|y|}{2}}\left[ 1+O\left(\frac{1}{|y|}\right)\right]
\label{apB : eq4}
\end{eqnarray}
when $x\in [x_1,x_2]$ and $|y|\rightarrow \infty$.

%%%%%%%%%%%%%%%%%%%%%%%%%%%%%%%%%%%%%%%%%%%%%%%%%%%%%%%%%%%%%%%%%%%%%%%%%%%%%%%%%%%%%%%%%%%%%%%%
\bibliographystyle{plain}

\end{document}